\newcommand{\diam}{\mathrm{diam}}
\newcommand{\ri}{\mathcal{R}}
\newcommand{\rin}{\mathcal{R}^{\infty}}
\newcommand{\eps}{\varepsilon}
\newcommand{\Cech}{\v{C}ech\xspace}
\newcommand{\R}{\mathbb{R}}
\newcommand{\Z}{\mathbb{Z}}
\newcommand{\N}{\mathbb{N}}
\newcommand{\us}{\mathcal{ULS}}
\newcommand{\ux}{\mathcal{X}}
\newcommand{\tg}{\tilde{g}}
\newcommand{\ch}{\mathcal{C}}
\newcommand{\lin}{L_\infty}
\newcommand{\dmn}{\diam_\infty}
\newcommand{\field}{\mathcal{F}}
\newcommand{\kernel}{\mathrm{ker}}
\newcommand{\image}{\mathrm{im}}
\newcommand{\simplicialmap}{\varphi}
\newcommand{\chainmap}{\phi}
\newcommand{\linearmap}{\lambda}
\DeclarePairedDelimiter{\ceil}{\lceil}{\rceil}
\newtheorem{lemma}{Lemma}
\newtheorem{theorem}[lemma]{Theorem}
\begin{document}

\title{Improved Approximate Rips Filtrations with \\Shifted Integer Lattices}
\author{
Aruni Choudhary\footnote{Max Planck Institute for Informatics, 
Saarbr\"ucken, Germany \texttt{(aruni.choudhary@mpi-inf.mpg.de)}}
\and 
Michael Kerber\footnote{Graz University of Technology, 
Graz, Austria\texttt{(kerber@tugraz.at)}} 
\and
Sharath Raghvendra\footnote{Virginia Tech, 
Blacksburg, USA \texttt{(sharathr@vt.edu)}}}

\maketitle 

\begin{abstract}
Rips complexes are important structures for analyzing topological features of 
metric spaces. 
Unfortunately, generating these complexes constitutes an expensive task
because of a combinatorial explosion in the complex size.
For $n$ points in $\mathbb{R}^d$,
we present a scheme to construct a $3\sqrt{2}$-approximation of the
multi-scale filtration of the $L_\infty$-Rips complex, which extends to
a $O(d^{0.25})$-approximation of the Rips filtration for the Euclidean case.
The $k$-skeleton of the resulting approximation 
has a total size of $n2^{O(d\log k)}$. 
The scheme is based on the integer lattice 
and on the barycentric subdivision of the $d$-cube.
\end{abstract}

\section{Introduction}
\label{section:intro}

\emph{Persistent homology}~\cite{carlsson-survey,brunrer-book,elz-topological} is a 
technique to analyze of data sets using topological invariants. 
The idea is to build a multi-scale representation of the data set
and to track its homological changes across the scales.

A standard construction for the important case of point clouds in
Euclidean space is the \emph{Vietoris-Rips complex} 
(or just \emph{Rips complex}): for a scale parameter $\alpha\ge 0$, it is 
the collection of all subsets of points with diameter at most $\alpha$.
When $\alpha$ increases from $0$ to $\infty$, the Rips complexes form
a \emph{filtration}, an increasing sequence of nested simplicial complexes
whose homological changes can be computed and represented in terms of a 
\emph{barcode}.

The computational drawback of Rips complexes is their sheer size:
the $k$-skeleton of a Rips complex (that is, only subsets of size $\leq k+1$
are considered) for $n$ points consists of $\Theta(n^{k+1})$ simplices
because every $(k+1)$-subset joins the complex for a sufficiently large
scale parameter. 
This size bound turns barcode computations for large point clouds
infeasible even for low-dimensional homological features\footnote{An exception
are point clouds in $\R^2$ and $\R^3$, 
for which \emph{alpha complexes}~\cite{brunrer-book}
are an efficient alternative.}.
This poses the question of what we can say
about the barcode of the Rips filtration 
without explicitly constructing all of its simplices.

We address this question using approximation techniques. 
Barcodes form a metric space: two barcodes are close if the same 
homological features occur on roughly the same range of scales 
(see Section~\ref{section:backgnd} for the precise definition).
The first approximation scheme by Sheehy~\cite{sheehy-rips}
constructs a $(1+\eps)$-approximation of the $k$-skeleton of the 
Rips filtration
using only $n(\frac{1}{\eps})^{O(\lambda k)}$ simplices for arbitrary finite
metric spaces, 
where $\lambda$ is the doubling dimension of the metric.
Further approximation techniques for Rips complexes~\cite{desu-gic} 
and the closely related \emph{\Cech 
complexes}~\cite{bs-approximating,sheehy-cech,kbs-cech} have been derived 
subsequently, all with 
comparable size bounds.
More recently, we constructed an approximation scheme for Rips complexes
in Euclidean space that yields a worse approximation factor of $O(d)$, but uses
only $n2^{O(d\log k)}$ simplices~\cite{ckr-polynomial}, where $d$ is the ambient
dimension of the point set.

\subparagraph*{Contributions}
We present a $3\sqrt{2}$-approximation for the Rips filtration of 
$n$ points in $\R^d$ in the $L_\infty$-norm ,
whose $k$-skeleton has size $n2^{O(d\log k)}$. 
This translates to a $O(d^{0.25})$-approximation of the 
Rips filtration in the Euclidean metric and hence improves the asymptotic
approximation quality of our previous approach~\cite{ckr-polynomial} 
with the same size bound.

On a high level, our approach follows a straightforward approximation scheme:
given a scaled and appropriately shifted integer grid on $\R^d$, 
we identify those grid points that are close to the input points and 
build an approximation complex using these grid points.
The challenge lies in how to connect these grid points to a simplicial complex
such that close-by grid points are connected,
while avoiding too many connections to keep the size small. 
Our approach first selects a set of \emph{active faces}
in the cubical complex defined over the grid, and defines the approximation
complex using the barycentric subdivision of this cubical complex.

We also describe an output-sensitive algorithm to compute our approximation.
By randomizing the aforementioned shifts of the grids, we obtain a
worst-case running time of $n2^{O(d)}\log\Delta+2^{O(d)}M$,
where $\Delta$ is \emph{spread} of the point set (that is, the ratio
of the diameter to the closest distance of two points)
and $M$ is the size of the approximation.

Additionally, this paper makes the following technical contributions:

\begin{itemize}
\item 

We follow the standard approach of defining a sequence
of approximation complexes
and establishing an \emph{interleaving} between the Rips filtration 
and the approximation.
We realize our interleaving using \emph{chain maps} connecting 
a Rips complex at scale $\alpha$ to an approximation complex at scale $c\alpha$, 
and vice versa, with $c\geq 1$ being the approximation factor. 
Previous approaches~\cite{ckr-polynomial,desu-gic,sheehy-rips} used
\emph{simplicial maps} for the interleaving, which induce an elementary
form of chain maps and are therefore more restrictive.

The explicit construction of such maps can be a non-trivial task. 
The novelty of our approach is that we avoid this construction 
by the usage of \emph{acyclic carriers}~\cite{munkres}.
In short, carriers are maps that assign
subcomplexes to subcomplexes under some mild extra conditions.
While they are more flexible, they still certify the existence of 
suitable chain maps, as we exemplify in Section~\ref{section:scheme}.
We believe that this technique is of general interest for the construction
of approximations of cell complexes.

\item We exploit a simple trick that we call \emph{scale balancing} to improve
the quality of approximation schemes. 
In short, if the aforementioned interleaving maps from and to the Rips filtration 
do not increase the scale parameter by the same amount, one can simply multiply 
the scale parameter of the approximation by a constant. 
Concretely, given maps
\[
\phi_\alpha:\ri_\alpha\rightarrow \mathcal{X}_\alpha\qquad 
\psi_\alpha:\mathcal{X}_\alpha\rightarrow \ri_{c\alpha}
\]
interleaving the Rips complex $\ri_\alpha$ and the approximation 
complex $\mathcal{X}_\alpha$,
we can define $\mathcal{X}'_\alpha:=\mathcal{X}_{\alpha/\sqrt{c}}$ and obtain maps
\[\phi'_\alpha:\ri_\alpha\rightarrow \mathcal{X}'_{\sqrt{c}\alpha}\qquad 
\psi_\alpha:\mathcal{X}'_\alpha\rightarrow \ri_{\sqrt{c}\alpha}\]
which improves the interleaving from $c$ to $\sqrt{c}$.
While it has been observed that the same trick can be used for improving
the worst-case distance between Rips and \Cech filtrations\footnote{Ulrich Bauer, private communication}, our work seems to be the first to make use of it in
the context of approximations of filtrations.
\end{itemize}

Our technique can be combined with
dimension reduction techniques in the same way as in~\cite{ckr-polynomial}
(see Theorems 19, 21, and 22 therein), with improved logarithmic factors.
We omit the technical details in this paper. 
Also, we point out that 
the complexity bounds for size and computation time are for the entire
approximation scheme and not for a single scale as in~\cite{ckr-polynomial}.
However, similar techniques as the ones exposed 
in Section~\ref{section:computational} can be used to improve the results
of~\cite{ckr-polynomial} to hold for the entire approximation as 
well\footnote{An extended version of~\cite{ckr-polynomial}
containing these improvements is currently under submission.}.

\subparagraph*{Outline} We start the presentation by discussing the relevant 
topological concepts in Section~\ref{section:backgnd}. 
Then, we present few results about grid lattices in Section~\ref{section:grids}. 
Building on these ideas, the approximation scheme is presented in 
Section~\ref{section:scheme}. 
Computational aspects of 
the approximation scheme are discussed in Section~\ref{section:computational}. 
We conclude in Section~\ref{section:conclusion}.

\section{Background}
\label{section:backgnd}

We review the topological concepts needed in our argument;
see~\cite{bss-metrics,ch-proximity,brunrer-book,munkres} for more details.

\subparagraph*{Simplicial complexes}

A \emph{simplicial complex} $K$ on a finite set of elements $S$ 
is a collection of subsets $\{\sigma\subseteq S\}$ called \emph{simplices} 
such that each subset $\tau\subset\sigma$ is also in $K$.
The dimension of a simplex $\sigma\in K$ is $k:=|\sigma|-1$, 
in which case $\sigma$ is called a \emph{$k$-simplex}.
A simplex $\tau$ is a \emph{subsimplex} of $\sigma$ if $\tau\subseteq\sigma$. 
We remark that, commonly a subsimplex is called a 'face' of a simplex,
but we reserve the word 'face' for a different structure.
For the same reason, we do not introduce the common notation of
of 'vertices' and 'edges' of simplicial complexes, but rather refer
to $0$- and $1$-simplices throughout.
The \emph{$k$-skeleton} of $K$ consists of
all simplices of $K$ whose dimension is at most $k$.
For instance, the $1$-skeleton of $K$ is a graph 
defined by its $0$-simplices and $1$-simplices.

Given a point set $P\subset\R^d$ and a real number $\alpha\ge 0$,
the \emph{(Vietoris-)Rips} complex on $P$ at scale $\alpha$ consists of all
simplices $\sigma=(p_0,\ldots,p_k)\subseteq P$ such that $diam(\sigma)\le \alpha$,
where $diam$ denotes the diameter.
In this work, we write $\ri_\alpha$ for the Rips complex at scale $\alpha$ 
with the Euclidean metric,
and $\rin_\alpha$ when using the metric of the $L_\infty$-norm.
In either way, a Rips complex is an example of a \emph{flag complex},
which means that whenever a set $\{p_0,\ldots,p_k\}\subseteq P$ has the property
that every $1$-simplex $\{p_i,p_j\}$ is in the complex, then the
$k$-simplex $\{p_0,\ldots,p_k\}$ is also in the complex.

A simplicial complex $K'$ is a \emph{subcomplex} of $K$ if $K'\subseteq K$.
For instance, $\ri_{\alpha}$ is a subcomplex of $\ri_{\alpha'}$ for 
$0\le\alpha\leq\alpha'$. 
Let $L$ be a simplicial complex.
Let $\hat{\simplicialmap}$ be a map which assigns to each vertex of $K$, 
a vertex of $L$.
A map $\simplicialmap:K\rightarrow L$ is called a \emph{simplicial map}
induced by $\hat{\simplicialmap}$,
if for every simplex $\{p_0,\ldots,p_k\}$ in $K$, the set 
$\{\hat{\simplicialmap}(p_0),\ldots,\hat{\simplicialmap}(p_k)\}$
is a simplex of $L$. 
For $K'$ a subcomplex of $K$, the inclusion map $inc:K'\rightarrow K$
is an example of a simplicial map. 
A simplicial map $K\rightarrow L$ is completely determined
by its action on the $0$-simplices of $K$.

\subparagraph*{Chain complexes}

A \emph{chain complex} $\ch_\ast=(\ch_p,\partial_p)$ with $p\in\N$ is a collection 
of abelian groups $\ch_p$ and homomorphisms $\partial_p:\ch_p\rightarrow \ch_{p-1}$ 
such that $\partial_{p-1}\circ\partial_{p}=0$.
A simplicial complex $K$ gives rise to a chain complex $\ch_\ast(K)$ by fixing a 
base field $\field$, defining $\ch_p$ as the set of formal linear combinations of 
$p$-simplices in $K$ over $\field$, and $\partial_p$ as the linear operator that 
assigns to each simplex the (oriented) sum of its sub-simplices of codimension 
one\footnote{To avoid thinking about orientations, it is often assumed
that $\field=\Z_2$ is the field with two elements.}.

A \emph{chain map} $\chainmap:\ch_\ast\rightarrow D_\ast$ between
chain complexes $\ch_\ast=(\ch_p,\partial_p)$ and $D_\ast=(D_p,\partial'_p)$
is a collection of group homomorphisms $\chainmap_p:\ch_p\rightarrow D_p$
such that $\chainmap_{p-1}\circ\partial_{p}=\partial'_{p}\circ\chainmap_{p}$.
For example, a simplicial map $\simplicialmap$ between simplicial complexes induces
a chain map $\bar{\simplicialmap}$ between the corresponding chain complexes.
This construction is \emph{functorial}, meaning that for 
$\simplicialmap$ the identity function on a simplicial complex $K$, 
$\bar{\simplicialmap}$ is the identity function on $\ch_\ast(K)$,
and for composable simplicial maps $\simplicialmap,\simplicialmap'$,
we have that 
$\overline{\simplicialmap\circ\simplicialmap'}=\bar{\simplicialmap}\circ\bar{\simplicialmap'}$.

\subparagraph*{Homology and carriers}

The \emph{$p$-th homology group} $H_p(\ch_\ast)$ of a chain complex is defined 
as $\kernel\,\partial_p/\image\,\partial_{p+1}$.
The $p$-th homology group of a simplicial complex $K$, $H_p(K)$, is the 
$p$-th homology group of its induced chain complex.
In either case $H_p(\ch_\ast)$ is a $\field$-vector space because we have 
chosen our base ring $\field$ as a field.
Intuitively, when the chain complex is generated from a simplicial complex, the 
dimension of the $p$-th homology group counts the number of $p$-dimensional holes 
in the complex 
(except for $p=0$, where it counts the number of connected components).
We write $H(\ch_\ast)$ for the direct sum of all $H_p(\ch_\ast)$ for $p\geq 0$.

A chain map $\chainmap:\ch_\ast\rightarrow D_\ast$ 
induces a linear map $\chainmap^\ast: H(\ch_\ast)\rightarrow H(D_\ast)$
between the homology groups. 
Again, this construction is functorial, meaning that it maps identity maps to 
identity maps, and it is compatible with compositions.

We call a simplicial complex $K$ \emph{acyclic}, if $K$ is connected and all 
homology groups $H_p(K)$ with $p\geq 1$ are trivial.
For simplicial complexes $K$ and $L$, 
an \emph{acyclic carrier} $\Phi$ is a map that assigns to 
each simplex $\sigma$ in $K$, a non-empty subcomplex $\Phi(\sigma)\subseteq L$ 
such that $\Phi(\sigma)$ is acyclic, and whenever $\tau$ is a subsimplex 
of $\sigma$, then $\Phi(\tau)\subseteq\Phi(\sigma)$.
We say that a chain $c\in\ch_p(K)$ is \emph{carried} by a subcomplex $K'$, 
if $c$ takes value $0$ except for $p$-simplices in $K'$.
A chain map $\chainmap:\ch_\ast(K)\rightarrow \ch_\ast(L)$ is 
\emph{carried by $\Phi$}, if for each simplex $\sigma\in K$,
$\chainmap(\sigma)$ is carried by $\Phi(\sigma)$.
We state the \emph{acyclic carrier theorem}~\cite{munkres}:
\begin{theorem}
\label{theorem:acyclic_carrier}
Let $\Phi:K\rightarrow L$ be an acyclic carrier.
\begin{itemize}
	\item There exists a chain map $\chainmap:\ch_\ast(K)\rightarrow \ch_\ast(L)$ 
	such that $\chainmap$ is carried by $\Phi$.
	\item If two chain maps $\chainmap_1,\chainmap_2:\ch_\ast(K)\rightarrow \ch_\ast(L)$
	are both carried by $\Phi$, then $\chainmap_1^\ast=\chainmap_2^\ast$.
\end{itemize}
\end{theorem}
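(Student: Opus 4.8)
The plan is to establish both items by induction on the dimension of the simplices of $K$, extending the maps one skeleton at a time. Two features of simplicial chain complexes make this work: $\ch_p(K)$ is free on the $p$-simplices of $K$, so a homomorphism out of it may be prescribed simplex by simplex; and the value on a $p$-simplex $\sigma$ may be chosen to be any $p$-chain of $L$, in particular any $p$-chain of $\Phi(\sigma)$. It is cleanest to run the induction on the \emph{augmented} chain complexes, so that ``$\Phi(\sigma)$ is acyclic'' becomes the single statement that the augmented chain complex of $\Phi(\sigma)$ is exact in every degree (in degree $0$ this is connectedness); correspondingly one works with augmentation-preserving chain maps, as in~\cite{munkres}, which holds for the maps occurring in our application.

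\emph{Existence.} Let $\chainmap_{-1}$ be the identity on $\field$ and, for each $0$-simplex $v$ of $K$, let $\chainmap_0(v)$ be any $0$-simplex of the nonempty complex $\Phi(v)$; this is carried by $\Phi$ and augmentation-preserving. Suppose $\chainmap_q$ is defined, carried by $\Phi$, and satisfies $\partial_q\chainmap_q=\chainmap_{q-1}\partial_q$ for all $q<p$. For a $p$-simplex $\sigma$, consider $c:=\chainmap_{p-1}(\partial_p\sigma)$. Since $\partial_p\sigma$ is a combination of subsimplices $\tau\subset\sigma$ and $\chainmap_{p-1}(\tau)\in\ch_{p-1}(\Phi(\tau))\subseteq\ch_{p-1}(\Phi(\sigma))$, the chain $c$ lies in $\ch_{p-1}(\Phi(\sigma))$; moreover $\partial_{p-1}c=\chainmap_{p-2}(\partial_{p-1}\partial_p\sigma)=0$, so $c$ is a cycle. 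Acyclicity of $\Phi(\sigma)$ yields $d\in\ch_p(\Phi(\sigma))$ with $\partial_p d=c$; set $\chainmap_p(\sigma):=d$ and extend linearly. Then $\chainmap_p$ is carried by $\Phi$ and satisfies $\partial_p\chainmap_p=\chainmap_{p-1}\partial_p$, completing the induction.

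\emph{Agreement on homology.} Given $\chainmap_1,\chainmap_2$ carried by $\Phi$, I would build a chain homotopy: homomorphisms $D_p:\ch_p(K)\to\ch_{p+1}(L)$ with $D_p(\sigma)$ carried by $\Phi(\sigma)$ and $\partial_{p+1}D_p+D_{p-1}\partial_p=\chainmap_1-\chainmap_2$, from which $\chainmap_1^\ast=\chainmap_2^\ast$ follows immediately. Set $D_{-1}=0$; for a $0$-simplex $v$, the chain $\chainmap_1(v)-\chainmap_2(v)$ lies in $\ch_0(\Phi(v))$ and has augmentation $0$, hence bounds some $D_0(v)\in\ch_1(\Phi(v))$. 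Inductively, for a $p$-simplex $\sigma$ put $e:=\chainmap_1(\sigma)-\chainmap_2(\sigma)-D_{p-1}(\partial_p\sigma)$; the carrier condition applied to $\chainmap_1,\chainmap_2,D_{p-1}$ on the subsimplices of $\sigma$ shows $e\in\ch_p(\Phi(\sigma))$, and expanding $\partial_p e$ using $\partial\partial=0$, the chain-map identities, and the inductive relation $\partial_p D_{p-1}+D_{p-2}\partial_{p-1}=\chainmap_1-\chainmap_2$ gives $\partial_p e=0$. Acyclicity of $\Phi(\sigma)$ then provides $D_p(\sigma)\in\ch_{p+1}(\Phi(\sigma))$ with $\partial_{p+1}D_p(\sigma)=e$; extending linearly yields the required relation.

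\emph{Main obstacle.} All the content sits in the two inductive steps, and the point requiring care is to verify simultaneously that the chain to be bounded ($c$, respectively $e$) is supported on $\Phi(\sigma)$ — so that acyclicity of $\Phi(\sigma)$ may be invoked — and that it is a cycle; the first uses only that $\Phi$ is monotone under taking subsimplices together with the carried-ness of the maps already constructed, the second is a mechanical computation with $\partial\partial=0$ and the identities at hand. The only genuine subtlety is degree $0$, which is precisely where the augmentation enters, making it worthwhile to fix augmentation-preservation at the outset.
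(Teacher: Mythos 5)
The paper does not prove this theorem; it states it as a known result and cites~\cite{munkres}, so there is no proof in the paper to compare against. Your argument is the standard proof from Munkres: an induction over skeleta, using freeness of the simplicial chain groups to prescribe the map simplex by simplex, and using acyclicity of $\Phi(\sigma)$ (as vanishing of reduced homology) to find the required preimages under $\partial$. Both inductive steps are set up and carried out correctly, including the verification that the chain to be bounded is supported on $\Phi(\sigma)$ (via the carried-ness of the previously constructed maps and the monotonicity $\Phi(\tau)\subseteq\Phi(\sigma)$ for $\tau\subset\sigma$) and that it is a cycle in the augmented complex. You are also right to flag, and right to resolve by passing to augmented complexes, a subtlety that the statement as quoted glosses over: the uniqueness claim fails without some augmentation-preservation hypothesis on $\chainmap_1,\chainmap_2$ (e.g.\ $K=\{v\}$, $L=\{w\}$, $\Phi(v)=L$, $\chainmap_1(v)=w$, $\chainmap_2(v)=2w$ are both carried but induce different maps on $H_0$). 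Munkres states the theorem under that hypothesis, and as you note it holds for the carried chain maps produced by the existence part and for those arising in this paper's application.
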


\subparagraph*{Filtrations and towers}
Let $I\subseteq\R$ be a set of real values which we refer to as \emph{scales}.
A \emph{filtration} is a collection of simplicial complexes
$(K_\alpha)_{\alpha\in I}$ such that $K_\alpha\subseteq K_\alpha'$ 
for all $\alpha\leq\alpha'\in I$. 
For instance, $(\ri_\alpha)_{\alpha\geq 0}$ is a filtration 
which we call the \emph{Rips filtration}.
A \emph{(simplicial) tower} is a sequence $(K_\alpha)_{\alpha\in J}$ of simplicial 
complexes with $J$ being a discrete set (for instance $J=\{2^k\mid k\in\Z\}$),
together with simplicial maps 
$\simplicialmap_\alpha:K_\alpha\rightarrow K_{\alpha'}$ between
complexes at consecutive scales.
For instance, the Rips filtration can be turned into a tower by restricting 
to a discrete range of scales,
and using the inclusion maps as $\simplicialmap$.
The approximation constructed in this paper will be another example of a tower.

We say that a simplex $\sigma$ is \emph{included} in the tower at scale $\alpha'$,
if $\sigma$ is not the image of 
$\simplicialmap_{\alpha}:K_\alpha\rightarrow K_{\alpha'}$,
where $\alpha$ is the scale preceding $\alpha'$ in the tower.
The \emph{size} of a tower is the number of simplices included over all scales.
If a tower arises from a filtration, its size is simply the size of the 
largest complex in the filtration (or infinite, if no such complex exists).
However, this is not true for in general for
simplicial towers, since simplices can collapse in the tower and the size of the
complex at a given scale may not take into account the collapsed simplices
which were included at earlier scales in the tower.

\subparagraph*{Barcodes and Interleavings}
A collection of vector spaces $(V_\alpha)_{\alpha\in I}$ connected with linear maps
$\linearmap_{\alpha_1,\alpha_2}:V_{\alpha_1}\rightarrow V_{\alpha_2}$ 
is called a \emph{persistence module}, if $\linearmap_{\alpha,\alpha}$ is the 
identity on $V_\alpha$ and 
$\linearmap_{\alpha_2,\alpha_3}\circ\linearmap_{\alpha_1,\alpha_2}
=\linearmap_{\alpha_1,\alpha_3}$ 
for all $\alpha_1\le\alpha_2\le\alpha_3\in I$ for the index set $I$.

We generate persistence modules using the previous concepts. 
Given a simplicial tower $(K_\alpha)_{\alpha\in I}$,
we generate a sequence of chain complexes $(\ch_\ast(K_\alpha))_{\alpha\in I}$.
By functoriality, the simplicial maps $\simplicialmap$ of the tower give rise
to chain maps $\overline{\simplicialmap}$ between these chain complexes.
Using functoriality of homology, we obtain a sequence $(H(K_\alpha))_{\alpha\in I}$
of vector spaces with linear maps $\overline{\simplicialmap}^\ast$, forming
a persistence module. 
The same construction can be applied to filtrations.

Persistence modules admit a decomposition into a collection of 
intervals of the form $[\alpha,\beta]$
(with $\alpha,\beta\in I$), called the \emph{barcode}, subject to certain
tameness conditions.
The barcode of a persistence module characterizes the module uniquely up to 
isomorphism.
If the persistence module is generated by a simplicial complex,
an interval $[\alpha,\beta]$ in the barcode corresponds 
to a homological feature (a ``hole'')
that comes into existence at complex $K_\alpha$ and persists until
it disappears at $K_\beta$. 

Two persistence modules $(V_\alpha)_{\alpha\in I}$ and $(W_\alpha)_{\alpha\in I}$ 
with linear maps $\linearmap_{\cdot,\cdot}$ and $\mu_{\cdot,\cdot}$ are said to be  
\emph{weakly (multiplicatively) $c$-interleaved}
with $c\geq 1$, if there exist linear maps $\gamma_\alpha:V_\alpha\rightarrow 
W_{c\alpha}$ and $\delta_\alpha:W_\alpha\rightarrow V_{c\alpha}$,
called \emph{interleaving maps},
such that the diagram
\begin{equation}
\label{diagram:weak_diag}
\xymatrix{
	& \cdots\ar[r] & V_{\alpha c} \ar[rd]^{\gamma}\ar[rr]^{\lambda} & & 
	V_{\alpha c^3} \ar[r] &\cdots
	\\
	\cdots\ar[r] & W_{\alpha} \ar[rr]^{\mu}\ar[ru]^{\delta} & & 
	W_{\alpha c^2}\ar[r] \ar[ru]^{\delta} &\cdots
	\\
}
\end{equation}
commutes for all $\alpha\in I$, that is, 
$\mu=\gamma \circ \delta $ and $\lambda= \delta\circ \gamma $ 
(we have skipped the subscripts of the maps for readability). 
In such a case, the barcodes of the two modules are $3c$-approximations 
of each other in the sense of~\cite{ch-proximity}. 
We say that two towers are \emph{$c$-approximations} of each other,
if their persistence modules that are $c$-approximations. 

Under more stringent interleaving conditions, the approximation ratio 
can be improved.
Given a totally ordered index set $J$, two persistence modules 
$(V_\alpha)_{\alpha\in J}$ and $(W_\alpha)_{\alpha\in J}$ with
linear maps $\linearmap_{\cdot,\cdot}$ and $\mu_{\cdot,\cdot}$ are said to be  
\emph{strongly (multiplicatively) $c$-interleaved}
with $c\geq 1$, if there exist linear maps $\gamma_\alpha:V_\alpha\rightarrow 
W_{c\alpha}$ and $\delta_\alpha:W_\alpha\rightarrow V_{c\alpha}$,
such that the diagrams
\begin{equation}
\label{diagram:strong_diag}
\xymatrix{
	V_{\frac{\alpha}{c}} \ar[rrr]^{\lambda} \ar[rd]^{\gamma} & & & V_{c\alpha'}    &  & V_{c\alpha} \ar[r]^{\lambda} & V_{c\alpha'} 
	\\
	& W_\alpha \ar[r]^{\mu} & W_{\alpha'} \ar[ru]^{\delta} &                 & W_\alpha \ar[r]^{\mu} \ar[ru]^{\delta} & W_{\alpha'} \ar[ru]^{\delta}
	\\ 
	& V_\alpha \ar[r]^{\lambda} & V_{\alpha'} \ar[rd]^{\gamma} &                 & V_\alpha \ar[r]^{\lambda} \ar[rd]^{\gamma} & V_{\alpha'} \ar[rd]^{\gamma} 
	\\ 
	W_{\frac{\alpha}{c}} \ar[rrr]^{\mu} \ar[ru]^{\delta} & & & 
	W_{c\alpha'}    &  & W_{c\alpha} \ar[r]^{\mu} & W_{c\alpha'}\\
}
\end{equation}
commute for all $\alpha\le \alpha' \in J$.
The barcodes of the two modules are $c$-approximations of each other in the 
sense of~\cite{ch-proximity}.

\section{Grids and cubes}
\label{section:grids}

Let $I:=\{\lambda 2^s\mid s\in\Z\}$ with $\lambda>0$ be a discrete set of scales. 
For a scale $\alpha_s:=\lambda 2^s$, we inductively define a grid $G_s$ on scale 
$\alpha_s$ which is a scaled and translated (shifted) version of the integer 
lattice: for $s=0$, $G_s$ is simply $\lambda\Z^d$, the scaled integer grid.
For $s\geq 0$, we choose an arbitrary $O\in G_s$ and define
\begin{eqnarray}
\label{equation:grids}
G_{s+1} = 2(G_s-O)+O+\frac{\alpha_s}{2}(\pm 1,\ldots,\pm 1)
\end{eqnarray}
where the signs of the components of the last vector are chosen uniformly
at random (and the choice is independent for each $s$).
For $s\leq 0$,
we define
\begin{eqnarray}
\label{equation:grids_2}
G_{s-1} = \frac{1}{2}(G_s-O)+O+\frac{\alpha_{s-1}}{2}(\pm 1,\ldots,\pm 1).
\end{eqnarray}
It is then easy to check that $\ref{equation:grids}$ and $\ref{equation:grids_2}$
are consistent at $s=0$.
A simple instance of the above construction is the sequence of lattices
with $G_s:=\alpha_s\Z^d$ for even $s$, and 
$G_s:=\alpha_s\Z^d + \frac{\alpha_{s-1}}{2}(1,\ldots,1)$ for odd $s$. 

We motivate the shifting next. 
For a finite point set $Q\subset\R^d$ and $x\in Q$, 
the \emph{Voronoi region} $Vor_{Q}(x)\subset \R^d$ is the (closed) set of points 
in $\R^d$ that have $x$ as one of its closest points in $Q$.
If $Q=G_s$, it is easy to see that the Voronoi region of any grid point $x$
is a cube of side length $\alpha_s$ centered at $x$. 
The shifting of the grids ensures that
each $x\in G_s$ lies in the Voronoi region of a unique $y\in G_{s+1}$.
By an elementary calculation, we can show a stronger statement, 
which we use frequently;
for shorter notation, we write $Vor_s(x)$ instead of $Vor_{G_s}(x)$.

\begin{lemma}
\label{lemma:vorcontain}
Let $x\in G_s, y\in G_{s+1}$ such that $x\in Vor_{s+1}(y)$. 
Then, $Vor_{s}(x)\subset Vor_{s+1}(y)$.
\end{lemma}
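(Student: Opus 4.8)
The plan is to reduce the statement to a coordinate-wise inequality, exploiting the fact that both Voronoi regions are axis-aligned cubes. By the discussion preceding the lemma, $Vor_s(x)$ is the cube of side length $\alpha_s$ centered at $x$, and $Vor_{s+1}(y)$ is the cube of side length $\alpha_{s+1}=2\alpha_s$ centered at $y$. Since these are $L_\infty$-balls, $Vor_s(x)\subseteq Vor_{s+1}(y)$ is equivalent to the statement that the center $x$ is deep enough inside $Vor_{s+1}(y)$, namely $\|x-y\|_\infty \le \frac{\alpha_{s+1}}{2}-\frac{\alpha_s}{2}=\frac{\alpha_s}{2}$. So the whole lemma boils down to showing that the hypothesis $x\in Vor_{s+1}(y)$ already forces $\|x-y\|_\infty\le\frac{\alpha_s}{2}$, i.e. that $x$ cannot lie on or near the boundary of $y$'s Voronoi cube but must sit well within its central half.

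First I would make the grid structure explicit. Using the recursion \eqref{equation:grids}, every point of $G_{s+1}$ has the form $O + 2(g-O) + \frac{\alpha_s}{2}(\pm1,\ldots,\pm1)$ for some $g\in G_s$, so in each coordinate the points of $G_{s+1}$ form an arithmetic progression with common difference $\alpha_{s+1}=2\alpha_s$, offset from the progression of $G_s$ (common difference $\alpha_s$) by exactly $\pm\frac{\alpha_s}{2}$ in that coordinate. The key elementary observation is therefore purely one-dimensional: if a point of an $\alpha_s$-spaced lattice lies in the Voronoi interval (length $2\alpha_s$) of a point of the coarser $2\alpha_s$-spaced lattice, and the two lattices are offset by $\pm\alpha_s/2$, then the fine point lies at distance exactly $\alpha_s/2$ from the coarse point — it sits precisely at the quarter-points of the coarse interval, never at its endpoints. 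This is exactly the role of the half-integer shift: it guarantees no fine grid point falls on a Voronoi boundary of the coarse grid, and pins down its position to the midpoint of a half-interval.

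So the steps, in order, are: (i) recall that $Vor_s(x)$ and $Vor_{s+1}(y)$ are the $L_\infty$-balls of radii $\alpha_s/2$ and $\alpha_s$ around $x$ and $y$; (ii) observe that cube containment $Vor_s(x)\subseteq Vor_{s+1}(y)$ holds iff $\|x-y\|_\infty \le \alpha_s/2$, by the triangle inequality in $L_\infty$ (and this is tight); (iii) decompose coordinate-wise and invoke the one-dimensional lattice-offset fact above to conclude that the hypothesis $x\in Vor_{s+1}(y)$ yields $|x_i-y_i|=\alpha_s/2$, hence $\le\alpha_s/2$, in every coordinate $i$; (iv) combine to get $\|x-y\|_\infty\le\alpha_s/2$ and close the argument via (ii).

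The only genuinely delicate point — "the main obstacle" in the sense that it is where the shifting is essential rather than cosmetic — is step (iii): verifying that the half-integer shift $\frac{\alpha_s}{2}(\pm1,\ldots,\pm1)$ makes the containment $x\in Vor_{s+1}(y)$ force the strict inequality $|x_i - y_i| = \alpha_s/2$ rather than merely $|x_i-y_i|\le\alpha_s$ (which would be automatic from $x\in Vor_{s+1}(y)$ but too weak). Here one has to be slightly careful about the tie-breaking convention in the definition of Voronoi regions, since $G_{s+1}$-Voronoi cells are closed and overlap on their boundaries; the shift is precisely what ensures $x$ never lands on such a shared boundary, so "$x\in Vor_{s+1}(y)$" is unambiguous and pins $x$ to a quarter-point. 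Everything else is routine $L_\infty$ geometry. The same analysis applies verbatim, mutatis mutandis, using \eqref{equation:grids_2} for the case $s\le 0$.
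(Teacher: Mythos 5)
Your proof is correct and takes essentially the same route as the paper: both reduce to explicit coordinates and observe that the half-shift of the coarser grid places each point of $G_s$ at $L_\infty$-distance exactly $\alpha_s/2$ from its nearest $G_{s+1}$-point, which is precisely the slack needed for the smaller Voronoi cube to fit inside the larger one. The paper normalizes ($\alpha_s=2$, $x$ the origin, shift $(1,\ldots,1)$) and directly checks $[-1,1]^d\subset[-1,3]^d$, whereas you phrase the same computation via the ball-containment criterion $\|x-y\|_\infty\le\alpha_{s+1}/2-\alpha_s/2$ and a coordinate-wise lattice-offset argument, which handles the arbitrary sign vector a bit more uniformly; the substance is identical.
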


\begin{proof}
Without loss of generality, we can assume that $\alpha_s=2$ and 
$x$ is the origin, using an appropriate translation and scaling. 
Also, we assume for simplicity that $G_{s+1}=2G_s + (1,\ldots,1)$;
the proof is analogous for any other translation vector.
In that case, it is clear that $y=(1,\ldots,1)$. 
Since $G_s=2\Z^d$, the Voronoi region of $x$ is the set $[-1,1]^d$.
Since $G_{s+1}$ is a translated version of $4\Z^d$,
the Voronoi region of $y$ is the cube $[-1,3]^d$, which covers $[-1,1]^d$.
\end{proof}

\subparagraph*{Cubical complexes}

The integer grid $\Z^d$ naturally defines a \emph{cubical complex},
where each element is an axis-aligned, $k$-dimensional cube with $0\leq k\leq d$.
To define it, let $\square$ denote the set of all integer translates
of faces of the unit cube $[0,1]^d$, considered as a convex polytope in $\R^d$.
We call the elements of $\square$ \emph{faces}. 
Each face has a dimension $k$; the $0$-faces, or \emph{vertices} are exactly
the points in $\Z^d$. 
Moreover, the \emph{facets} of a $k$-face $f$ are the $(k-1)$-faces 
contained in $f$.
We call a pair of facets of $f$ \emph{opposite} if they are disjoint.
Obviously, these concepts carry over to scaled and translated
versions of $\Z^d$, so we can define $\square_s$ as the cubical complex 
defined by $G_s$.

We define a map $g_s: \square_s\rightarrow \square_{s+1}$ as follows:
for vertices, we assign to $x\in G_s$ the (unique) vertex $y\in G_{s+1}$ such that
$x\in Vor_{s+1}(y)$ (cf. Lemma~\ref{lemma:vorcontain}).
For a $k$-face $f$ of $\square_s$ with vertices $(p_1,\ldots,p_{2^k})$ in $G_s$,
we set $g_s(f)$ to be the convex hull of $\{g_s(p_1),\ldots,g_s(p_{2^k})\}$;
the next lemma shows that this is indeed a well-defined map.

\begin{lemma}
	\label{lemma:gcell}
$\{g_s(p_1),\ldots,g_s(p_{2^k})\}$ are the vertices of a face $e$ of $G_{s+1}$.
Moreover, if $e_1,e_2$ are any two opposite facets of $e$, then there exists
a pair of opposite facets $f_1,f_2$ of $f$ such that $g_s(f_1)=e_1$ and
$g_s(f_2)=e_2$.
\end{lemma}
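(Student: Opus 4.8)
The plan is to reduce the statement to an elementary, coordinate-by-coordinate computation by first showing that, after a suitable normalization, the vertex map $g_s$ acts independently on each coordinate.

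First I would normalize exactly as in the proof of Lemma~\ref{lemma:vorcontain}: by an appropriate translation and uniform scaling, assume $\alpha_s=2$ and $G_s=2\Z^d$, and, possibly after a further coordinatewise translation to normalize the random shift vector, $G_{s+1}=4\Z^d+(1,\dots,1)$. The crucial point is that each Voronoi region $Vor_{s+1}(y)$, $y\in G_{s+1}$, is the axis-aligned cube $\prod_{i=1}^d[y_i-2,\,y_i+2]$, a \emph{product} of intervals; hence $x\in Vor_{s+1}(y)$ iff $|x_i-y_i|\le 2$ for every $i$. Together with the uniqueness of $y$ guaranteed by the shift, this shows that the $i$-th coordinate of $g_s(x)$ depends only on $x_i$: thus $g_s(x)=(h(x_1),\dots,h(x_d))$ where $h\colon 2\Z\to 4\Z+1$ sends each value to the unique nearby element of $4\Z+1$; explicitly $h(4m)=h(4m+2)=4m+1$.

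Next I would describe a $k$-face $f$ of $\square_s$ as a product $\prod_{i\in A}[c_i,c_i+2]\times\prod_{i\notin A}\{c_i\}$ with $|A|=k$ and $c_i\in 2\Z$, so that its $2^k$ vertices are obtained by choosing each free coordinate $i\in A$ to be $c_i$ or $c_i+2$. Applying $h$ to each coordinate: a fixed coordinate $i\notin A$ yields the single value $h(c_i)$; a free coordinate with $c_i\equiv 0\pmod 4$ \emph{collapses}, since $h(c_i)=h(c_i+2)$; and a free coordinate with $c_i\equiv 2\pmod 4$ yields the two values $h(c_i)=c_i-1$ and $h(c_i+2)=c_i+3$, the endpoints of a $1$-face of $\square_{s+1}$. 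Setting $B:=\{\,i\in A: c_i\equiv 2\pmod 4\,\}$, the set $\{g_s(p_1),\dots,g_s(p_{2^k})\}$ is exactly the set of $2^{|B|}$ points obtained by choosing one endpoint per coordinate in $B$, i.e.\ the vertex set of a $|B|$-face $e$ of $\square_{s+1}$; surjectivity onto the vertices of $e$ holds because the choices over $B$ are independent (coordinates in $A\setminus B$ being irrelevant). This proves the first claim.

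For the ``moreover'' part, recall that the facets of $e$ are obtained by fixing one coordinate $j\in B$ to one of its two endpoints, and two facets are opposite exactly when they arise from the same $j$. Given opposite facets $e_1,e_2$ of $e$ along coordinate $j$, let $f_1,f_2$ be the facets of $f$ obtained by fixing the \emph{same} coordinate $j$ (legitimate, since $j\in B\subseteq A$) at $c_j$ and at $c_j+2$; these are opposite facets of $f$, and since $g_s$ is coordinatewise with $\{h(c_j),h(c_j+2)\}$ the pair of endpoint values distinguishing $e_1$ from $e_2$, we get $\{g_s(f_1),g_s(f_2)\}=\{e_1,e_2\}$ (relabel if needed). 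The only thing requiring care --- and the closest this proof comes to an obstacle --- is that one must fix a \emph{non-collapsing} coordinate ($j\in B$, not $j\in A\setminus B$): fixing a collapsing coordinate of $f$ would map onto all of $e$ rather than onto a facet. Since opposite facets of $e$ always correspond to a coordinate in $B$, such a choice always exists, and the remaining points (handling a general shift vector via the initial normalization, matching the labels $e_1\leftrightarrow f_1$) are routine bookkeeping.
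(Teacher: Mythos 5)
Your proof is correct, and it takes a genuinely different route from the paper. The paper proves the first claim by induction on the dimension of the face (splitting $f$ into opposite facets along one coordinate, applying the induction hypothesis, and then distinguishing the collapsed and non-collapsed cases), and proves the second claim by a contradiction argument starting from the maximal subface $h$ of $f$ with $g_s(h)=e_1$. You instead normalize once and for all, observe that $g_s$ is a product map $x\mapsto(h(x_1),\dots,h(x_d))$ with $h(4m)=h(4m+2)=4m+1$, classify the free coordinates of $f$ into collapsing ($c_i\equiv 0\bmod 4$) and non-collapsing ($c_i\equiv 2\bmod 4$) ones, and then simply read off both claims from this explicit description. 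Your route is more elementary and self-contained; it gives a complete picture of the behavior of $g_s$ on faces (in particular identifying exactly which face $e$ is, namely the $|B|$-face determined by the non-collapsing coordinates), which makes the ``moreover'' part almost immediate rather than requiring the paper's maximality argument. The paper's inductive approach avoids committing to the normalized coordinate formula and is phrased so as to generalize verbatim to any admissible shift vector, but in substance both arguments hinge on the same fact (coordinatewise independence of $g_s$), which the paper asserts informally and you make precise. One small point worth noting explicitly in a full write-up: the uniqueness of the nearest $G_{s+1}$-point to a $G_s$-point needs the parity observation that $x_i-y_i$ is odd (so never $\pm 2$), which you invoke implicitly via ``the uniqueness of $y$ guaranteed by the shift.''
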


\begin{proof}

\textbf{First claim:}
We prove the first claim by induction on the dimension of faces of $G_s$.
Base case: for vertices, the claim is trivial using Lemma~\ref{lemma:vorcontain}. 
Induction case: let the claim hold true for all $(k-1)$-faces of $G_s$.
We show that the claim holds true for all $k$-faces of $G_s$.

Let $f$ be a $k$-face of $G_s$. 
Let $f_1$ and $f_2$ be opposite facets of $f$, along the $m$-th co-ordinate.
Let the vertices of $f_1$ be $(p_1,\ldots,p_{2^{k-1}})$
and $f_2$ be $(p_{2^{k-1}+1},\ldots,p_{2^{k}})$ taken in the same order, that is,
$p_j$ and $p_{2^{k-1}+j}$ differ in only the $m$-th coordinate for all 
$1\le j\le 2^{k-1}$. 
By definition, all vertices of $f_1$ share the $m$-th coordinate, and we denote
coordinate of these vertices by $z$.
Then, the $m$-th coordinate of all vertices of $f_2$ equals $z+\alpha_s$.
By induction hypothesis, $e_1=g_s(f_1)$ and $e_2=g_s(f_2)$ 
are two faces of $G_{s+1}$. 
We show that the vertices of $e_1\cup e_2$ are vertices of a face $e$ of $G_{s+1}$.

The map $g_s$ acts on each coordinate direction independently. 
Therefore, $g_s(p_j)$ and $g_s(p_{2^{k-1}+j})$ have the same coordinates, 
except possibly the $m$-th coordinate. 
This further implies that $e_2$ is a translate of $e_1$ along the $m$-th
coordinate.

There are two cases: if $e_1$ and $e_2$ share the $m$-th coordinate, then
$e_1=e_2$ and therefore $g_s(f)=e_1=e_2=e$, so the claim follows.
On the other hand, if $e_1$ and $e_2$ 
do not share the $m$-th coordinate: $e_1$'s $m$-th coordinate is $g_s(z)$,
while for $e_2$ it is $g_s(z+\alpha_s)$.
From the structure of $g_s$, we see that $g_s(z)$ and $g_s(z+\alpha_s)$ 
differ by $\alpha_{s+1}$.
It follows that $e_1$ and $e_2$ are two faces of $\square_{s+1}$ which differ 
in only one coordinate by $\alpha_{s+1}$.
So they are opposite facets of a codimension-1 face $e$ of $G_{s+1}$.
Using induction, the claim follows.

\textbf{Second claim:} 
Without loss of generality, assume that $x_1$  is the direction
in which $e_2$ is a translate of $e_1$.
Let $h$ denote the maximal face of $f$ such that $g_s(h)=e_1$.
Clearly, $h\neq f$, since that would imply $g_s(f)=e_1=e$, 
which is a contradiction. 

Suppose $h$ has dimension less than $k-1$.
Let $h'$ be the face of $c$, obtained by translating $h$ along $x_1$.
As in the first claim, it is easy to see that $g_s(h')=e_2$,
from the structure of $g_s$. 
This means that there is a facet $i$ of $f$
containing $h$ and $h'$ such that $g_s(i)=e$. 
Let $i'$ be the opposite facet of $i$ in $f$ and let $x_2$ be the direction
which separates $i$ from $i'$. 
Then, $g_s(i')=e$ otherwise $g_s(f)=e$ does not hold.
Let $h''$ be the face of $f$, obtained by translating $h$ along $x_2$. 
Then, from the structure of $g_s$, $g_s(h'')=e_1$ holds.
The facet of $f$ containing $h$ and $h''$ also maps to $e_1$ under $g_s$.
This is a contradiction to our assumption that $h$ is the highest dimensional
face of $f$ such that $g_s(h)=e_1$. 
See Figure~\ref{figure:gridmap_correct} for a simple illustration.

\begin{figure}
	\centering
	\scalebox{0.35}{\setlength{\unitlength}{4144sp}%
\begingroup\makeatletter\ifx\SetFigFont\undefined%
\gdef\SetFigFont#1#2#3#4#5{%
  \reset@font\fontsize{#1}{#2pt}%
  \fontfamily{#3}\fontseries{#4}\fontshape{#5}%
  \selectfont}%
\fi\endgroup%
\begin{picture}(16047,6247)(-2534,-5084)
\thinlines
{\color[rgb]{0,0,0}\put(6301,-1006){\line( 1, 0){7200}}
}%
{\color[rgb]{0,0,0}\put(-1799,-4561){\vector( 1, 0){2250}}
}%
{\color[rgb]{0,0,0}\put(-1799,-4561){\vector( 0, 1){2250}}
}%
{\color[rgb]{0,0,0}\put(  1,-2761){\framebox(3600,3600){}}
}%
\put(-2519,-3751){\makebox(0,0)[lb]{\smash{{\SetFigFont{20}{24.0}{\rmdefault}{\mddefault}{\updefault}{\color[rgb]{0,0,0}$x_2$}%
}}}}
\put(-989,-4966){\makebox(0,0)[lb]{\smash{{\SetFigFont{20}{24.0}{\rmdefault}{\mddefault}{\updefault}{\color[rgb]{0,0,0}$x_1$}%
}}}}
\put(5941,-871){\makebox(0,0)[lb]{\smash{{\SetFigFont{20}{24.0}{\rmdefault}{\mddefault}{\updefault}{\color[rgb]{0,0,0}$e_1$}%
}}}}
\put(13276,-826){\makebox(0,0)[lb]{\smash{{\SetFigFont{20}{24.0}{\rmdefault}{\mddefault}{\updefault}{\color[rgb]{0,0,0}$e_2$}%
}}}}
\put(-359,-2716){\makebox(0,0)[lb]{\smash{{\SetFigFont{20}{24.0}{\rmdefault}{\mddefault}{\updefault}{\color[rgb]{0,0,0}$h$}%
}}}}
\put(-404,884){\makebox(0,0)[lb]{\smash{{\SetFigFont{20}{24.0}{\rmdefault}{\mddefault}{\updefault}{\color[rgb]{0,0,0}$h''$}%
}}}}
\put(1756,-3166){\makebox(0,0)[lb]{\smash{{\SetFigFont{20}{24.0}{\rmdefault}{\mddefault}{\updefault}{\color[rgb]{0,0,0}$i$}%
}}}}
\put(1756,1066){\makebox(0,0)[lb]{\smash{{\SetFigFont{20}{24.0}{\rmdefault}{\mddefault}{\updefault}{\color[rgb]{0,0,0}$i'$}%
}}}}
\put(1711,-1096){\makebox(0,0)[lb]{\smash{{\SetFigFont{34}{40.8}{\rmdefault}{\mddefault}{\updefault}{\color[rgb]{0,0,0}$f$}%
}}}}
\put(3691,-2716){\makebox(0,0)[lb]{\smash{{\SetFigFont{20}{24.0}{\rmdefault}{\mddefault}{\updefault}{\color[rgb]{0,0,0}$h'$}%
}}}}
\put(9811,-826){\makebox(0,0)[lb]{\smash{{\SetFigFont{34}{40.8}{\rmdefault}{\mddefault}{\updefault}{\color[rgb]{0,0,0}$e$}%
}}}}
\end{picture}
	\caption[Grid map]{
		The face $f$ is a square, for which $g(f)=e$ is a line segment.
		The horizontal and vertical directions are $x_1$ and $x_2$ respectively.
		The rest of the labels are self-explanatory in relation to the text.
		}
		\label{figure:gridmap_correct}
		\end{figure}
		
		Therefore, the only possibility is that $h$ is a facet $f_1$ of $f$ such 
		that $g_s(f_1)=e_1$. Let $f_2$ be the opposite facet of $f_1$. 
		From the structure of $g_s$, it is easy to see that $g_s(f_2)=e_2$.
		The claim follows.
		\end{proof}
			
\subparagraph*{Barycentric subdivision}

A \emph{flag} in $\square_s$ is a set of faces $\{f_0,\ldots,f_k\}$ of $\square_S$
such that $f_0\subseteq \ldots\subseteq f_k$.
The \emph{barycentric subdivision} $sd_s$ of $\square_s$ is the 
(infinite) simplicial complex
whose simplices are the flags of $\square_s$; in particular, the $0$-simplices
of $sd_s$ are the faces of $\square_s$. 
An equivalent geometric description of $sd_s$ can be obtained by defining the 
$0$-simplices as the barycenters of the faces in $sd_s$, and introducing a 
$k$-simplex between $(k+1)$ barycenters if the corresponding faces form a flag. 
It is easy to see that $sd_s$ is a flag complex.
Given a face $f$ in $\square_s$, we write $sd(f)$ for the subcomplex of $sd_s$
consisting of all flags that are formed only by faces contained in $f$.

\section{Approximation scheme}
\label{section:scheme}

We define our approximation complex at scale $\alpha_s$
as a finite subcomplex of $sd_s$. 
To simplify the subsequent analysis,
we define the approximation in a slightly generalized form.

\subparagraph*{Barycentric spans}

For a fixed $s$, let $V$ denote a non-empty subset of $G_s$.
We say that a face $f\in\square_s$ is \emph{spanned} by $V$
if $f\cap V$ is non-empty and not contained in any facet of $f$.
Trivially, the vertices of $\square_s$ spanned by $V$ are precisely 
the points in $V$.
We point out that the set of spanned faces is \emph{not} closed
under taking sub-faces; for instance, if $V$ consists of two antipodal
points of a $d$-cube, the only faces spanned by $V$ are the $d$-cube
and the two vertices.

The \emph{barycentric span} of $V$ is the subcomplex of $sd_s$ defined
by all flags $\{f_0,\ldots,f_k\}$ such that all $f_i$ are spanned by $V$. 
This is indeed a subcomplex of $sd_s$ because it is closed
under taking subsets. 
Moreover, for a face $f\in\square_k$,
we define the \emph{$f$-local barycentric span} of $V$ as the set of all flags
$\{f_0,\ldots,f_k\}$ in the barycentric span such that $f_i\subseteq f$ for all $i$.
This is a subcomplex both of $sd(f)$ and of the barycentric span of $V$
and is a flag complex.

\begin{lemma}
\label{lemma:uxcomplex}
For each face $f$, the $f$-local barycentric span of $V$ is either empty or acyclic.
\end{lemma}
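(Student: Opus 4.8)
The plan is to show that the $f$-local barycentric span of $V$, when non-empty, is a simplicial cone, and hence contractible; contractibility gives connectedness together with vanishing higher homology, which is exactly acyclicity.

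First I would dispose of the empty case: if $f\cap V=\emptyset$ then every face $g\subseteq f$ satisfies $g\cap V\subseteq f\cap V=\emptyset$, so no face contained in $f$ is spanned by $V$ and the complex is empty. So assume $f\cap V\neq\emptyset$. The key observation is a reformulation of ``spanned'': a proper sub-face of a cube always lies in one of its facets, so a face $g$ has $g\cap V$ contained in some facet of $g$ if and only if $g$ is not the smallest face of $\square_s$ containing $g\cap V$. Hence $g$ is spanned by $V$ precisely when $g$ equals the smallest face of $\square_s$ containing $g\cap V$. (Such a smallest face is well defined for any non-empty set of grid points lying in a common face, since the faces of $\square_s$ are closed under intersection.)

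Now let $g^\ast$ be the smallest face of $\square_s$ containing the non-empty set $f\cap V$; then $g^\ast\subseteq f$, and $g^\ast\cap V=f\cap V$ because $f\cap V\subseteq g^\ast\subseteq f$, so $g^\ast$ is the smallest face containing $g^\ast\cap V$ and is therefore spanned by $V$; in particular the flag $\{g^\ast\}$ lies in the $f$-local barycentric span, so the complex is non-empty. Furthermore, if $g\subseteq f$ is any face spanned by $V$, then $g$ is the smallest face containing $g\cap V$, and from $g\cap V\subseteq f\cap V\subseteq g^\ast$ with $g^\ast$ a face, minimality forces $g\subseteq g^\ast$. Thus $g^\ast$ contains every face occurring in the complex, so for any of its flags $\{f_0\subseteq\dots\subseteq f_j\}$ the set $\{f_0,\dots,f_j,g^\ast\}$ is again a flag of faces contained in $f$ and spanned by $V$. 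This exhibits the $f$-local barycentric span as a cone with apex the vertex $g^\ast$, which is contractible, hence acyclic. The only point needing (routine) care is that faces of a cubical complex are closed under intersection, so that ``smallest face containing a vertex set'' makes sense; I anticipate no real obstacle.
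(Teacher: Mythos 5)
Your proof is correct and follows essentially the same strategy as the paper's: identify a single largest spanned face inside $f$ and show the whole $f$-local span cones onto it, giving a star-shaped (hence acyclic) complex. The paper asserts tersely that the maximal-dimension spanned face $e \subseteq f$ is unique and extends every flag, whereas you supply the justification: via the characterization that a face is spanned exactly when it is the smallest face containing its active vertices, and then deducing that the smallest face $g^\ast$ containing $f \cap V$ is spanned and contains every other spanned face in $f$. Same apex, same cone argument, just with the uniqueness and maximality claims actually proved rather than asserted.
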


\begin{proof}	
We assume that the $f$-local barycentric span of $V$ is not empty. 
Hence, $f$ contains a unique active face $e$ of maximal dimension 
that is spanned by $V$.
A simplex in a simplicial complex $K$ is called \emph{maximal} if no other simplex 
in $K$ contains it. 
It is known that if a simplicial complex $K$ contains a $0$-simplex $\sigma$
that lies in every maximal simplex, than $K$ is acyclic 
(in this case, $K$ is called \emph{star-shaped}).
In our situation, the $0$-simplex $e$ belongs to every maximal simplex 
in the $f$-local barycentric span,
because every simplex not containing $e$ is a flag that can be extended 
by adding $e$ to its end.
\end{proof}
	
Furthermore, if $W\subseteq V$, it is easy to see that faces spanned by $W$
are also spanned by $V$. 
Consequently, the barycentric span of $W$
is a subcomplex of the barycentric span of $V$.

\subparagraph{Approximation complex}
We denote by $P\subset \R^d$ a finite set of points.
For each point $p\in P$, we let $a_s(p)$ denote the grid
point in $G_s$ that is closest to $p$ (we assume for simplicity
that this closest point is unique). 
We define the \emph{active vertices of $G_s$}, $V_s$, 
as $a_s(P)$, that is, the set of grid points that are closest to some point in $P$.
The next statement is a direct application of the triangle inequality; 
let $\dmn$ denote the diameter in the $\lin$-norm. 
\begin{lemma}
\label{lemma:iripscell}
Let $Q\subseteq P$ be such that $\dmn(Q)\le \alpha_s$. 
Then, the set $a_s(Q)$ is contained in a face of $\square_s$. 
Equivalently, for a simplex $\sigma=(p_0,\ldots,p_k)\in\rin_{\alpha_s}$ on $P$, 
the set of active vertices 
$\{a_{s}(p_0),\ldots,a_{s}(p_k)\}$ is contained in a face of $\square_{s}$.
\end{lemma}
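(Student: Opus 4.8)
The plan is to reduce to a one-dimensional counting argument, carried out coordinate by coordinate. Fix $Q\subseteq P$ with $\dmn(Q)\le\alpha_s$, and for each coordinate index $m\in\{1,\ldots,d\}$ let $I_m\subseteq\R$ denote the projection of $Q$ onto the $m$-th axis; since the diameter is taken in the $\lin$-norm, $I_m$ is contained in some interval of length at most $\alpha_s$. Restricted to the $m$-th coordinate, $G_s$ is an arithmetic progression of spacing $\alpha_s$, and its one-dimensional Voronoi cells partition $\R$ into consecutive intervals of length $\alpha_s$. First I would prove the elementary fact that an interval of length at most $\alpha_s$ can meet the interiors of at most two consecutive such cells: meeting the interiors of cells $j-1$ and $j+1$ would require a point strictly left of the boundary between cells $j-1$ and $j$ and a point strictly right of the boundary between cells $j$ and $j+1$, and the gap between these two boundaries is exactly $\alpha_s$, contradicting the length bound.

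Next I would pass from cells to active vertices. Because each point of $P$ is assumed to have a unique nearest grid point, no point of $Q$ has its $m$-th coordinate on a cell boundary; hence $(a_s(p))_m$ is the grid value of the unique cell whose interior contains the $m$-th coordinate of $p$, and by the previous step the set $\{(a_s(p))_m : p\in Q\}$ is either a single grid value $c_m$ or a pair of consecutive grid values $\{c_m,c_m+\alpha_s\}$. Form $e\subseteq\R^d$ as the Cartesian product over $m$ of $\{c_m\}$ in the former case and of the segment between $c_m$ and $c_m+\alpha_s$ in the latter. Then $e$ is exactly a face of $\square_s$ (of dimension equal to the number of coordinates in which two values occur), and each $a_s(p)$ with $p\in Q$ is a vertex of $e$, since in every coordinate its value lies in the corresponding factor. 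Hence $a_s(Q)$ is contained in the face $e$, which proves the first statement; the equivalent reformulation follows at once by taking $Q=\{p_0,\ldots,p_k\}$ and noting that $\sigma\in\rin_{\alpha_s}$ means precisely $\dmn(\{p_0,\ldots,p_k\})\le\alpha_s$.

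I expect the only nontrivial point — and the reason the claim is stronger than what a bare triangle inequality gives — to be the one-dimensional counting step. The naive estimate ``each $p$ is within $\frac{\alpha_s}{2}$ of $a_s(p)$, hence two active vertices are within $2\alpha_s$'' would only confine $a_s(Q)$ to an axis box of side $2\alpha_s$, i.e.\ up to three consecutive grid values per coordinate, which is not contained in a single face of $\square_s$. Exploiting the cell partition, together with the uniqueness-of-nearest-point assumption to discard boundary positions, is exactly what sharpens $2\alpha_s$ down to $\alpha_s$ and forces the target face to exist.
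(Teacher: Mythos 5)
Your proof is correct, but it takes a different route than the paper's. The paper argues by contradiction and in one global step: assuming two active vertices $a_s(x)$ and $a_s(y)$ (for $x,y\in Q$) lie in no common face, it observes that they must then differ by at least $2\alpha_s$ in some coordinate, hence $\|a_s(x)-a_s(y)\|_\infty\ge 2\alpha_s$; combining this with the fact that each of $x,y$ lies at $L_\infty$-distance strictly less than $\alpha_s/2$ from its (unique) nearest grid point, the triangle inequality forces $\|x-y\|_\infty>\alpha_s$, contradicting $\dmn(Q)\le\alpha_s$. You instead give a constructive, coordinate-by-coordinate argument: project onto each axis, count how many one-dimensional Voronoi cell interiors an interval of length $\le\alpha_s$ can meet, and assemble the target face as a product of singletons and unit segments. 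Both arguments hinge on exactly the same ingredient --- the assumed uniqueness of the nearest grid point turns a closed bound into a strict one --- but your version makes the face explicit, while the paper's is terser and nonconstructive.

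One small caveat about your final paragraph: you suggest that ``a bare triangle inequality'' cannot give the sharp conclusion and that the cell-partition viewpoint is needed. In fact, the paper's proof \emph{is} the bare triangle inequality; what makes it work is precisely the strictness $\|p-a_s(p)\|_\infty<\alpha_s/2$ coming from the uniqueness assumption, which is the same fact you invoke to rule out coordinates lying on a cell boundary. So your framing of why a sharper argument is necessary is accurate, but the paper obtains that sharpening within the triangle-inequality route rather than abandoning it.
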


\begin{proof}
We prove the claim by contradiction. 
Assume that the set of active vertices $a_s(Q)$ is not 
contained in a face of $\square_s$. 
Then, there exists $x,y\in Q$ such 
$a_s(x)$, $a_s(y)$ are not in a common face of $\square_s$. 
By the definition of the grid $G_s$, the grid points $a_s(x)$, $a_s(y)$ 
therefore have $L_\infty$-distance at least $2\alpha_s$.
Moreover, $x$ has $L_\infty$-distance less than $\alpha_s/2$ from $a_s(x)$, 
and the same is true for $y$ and $a_s(y)$.
By triangle inequality, the $L_\infty$-distance of $x$ and $y$ 
is more than $\alpha_s$, a contradiction.
\end{proof}

Vice versa, we define a map $b_s:V_s\rightarrow P$ by mapping an active vertex 
to its closest point in $P$ 
(again, assuming for simplicity that the assignment is unique).
The map $b_s$ is a section of $a_s$, that is, $a_s\circ b_s$ 
is the identity on $V_s$.

Recall that the map $g_s:\square_s\rightarrow\square_{s+1}$ from 
Section~\ref{section:grids} maps grid points of $G_s$ to grid points of $G_{s+1}$. 
With Lemma~\ref{lemma:vorcontain}, it follows at once:

\begin{lemma}
\label{lemma:gcompose}
For all $x\in V_s$, $g_s(x)=(a_{s+1}\circ b_s)(x)$.
\end{lemma}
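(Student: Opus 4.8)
The plan is to unwind all three maps on an arbitrary active vertex $x \in V_s$ and check that the two sides agree. The right-hand side $(a_{s+1}\circ b_s)(x)$ is computed in two stages: first $b_s(x)$ is the point $p \in P$ closest to $x$, and then $a_{s+1}(p)$ is the grid point of $G_{s+1}$ closest to $p$. The left-hand side $g_s(x)$ is, by the definition of $g_s$ on vertices given just before Lemma~\ref{lemma:gcell}, the unique $y \in G_{s+1}$ with $x \in Vor_{s+1}(y)$. So the statement amounts to: the vertex of $G_{s+1}$ closest to $x$ is the same as the vertex of $G_{s+1}$ closest to the $P$-point $b_s(x)$.

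First I would record the key containment. Since $x \in V_s = a_s(P)$, there is some $p \in P$ with $a_s(p) = x$, and in fact $b_s(x)$ is such a point; in particular $x = a_s(b_s(x))$, i.e. $x$ is the $G_s$-grid point closest to $p := b_s(x)$. This means $p \in Vor_s(x)$. Now apply Lemma~\ref{lemma:vorcontain}: if $y \in G_{s+1}$ is the vertex with $x \in Vor_{s+1}(y)$ (this $y$ exists and is unique by the shifting property of the grids noted before Lemma~\ref{lemma:vorcontain}), then $Vor_s(x) \subseteq Vor_{s+1}(y)$, hence $p \in Vor_{s+1}(y)$. Therefore $y$ is a closest point of $G_{s+1}$ to $p$, and by the uniqueness assumption on $a_{s+1}$ we get $a_{s+1}(p) = y$. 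Since also $y = g_s(x)$ by definition of $g_s$ on vertices, this gives $g_s(x) = y = a_{s+1}(b_s(x)) = (a_{s+1}\circ b_s)(x)$, as desired.

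There is essentially no obstacle here; the content is entirely in Lemma~\ref{lemma:vorcontain}, which upgrades ``$x$ is Voronoi-close to $y$'' to ``the whole Voronoi cell of $x$ is inside that of $y$,'' so that the representative point $b_s(x) \in Vor_s(x)$ is dragged along into $Vor_{s+1}(y)$. The only minor care needed is to handle the (standing) uniqueness-of-closest-point assumptions consistently, so that ``closest grid point'' genuinely defines the maps $a_s$, $a_{s+1}$, $b_s$ and $g_s|_{G_s}$ without ambiguity; once that is granted, the chain of equalities above is immediate and the lemma follows.
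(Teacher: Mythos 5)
Your proof is correct and is exactly the argument the paper has in mind: the paper simply says the lemma "follows at once" from Lemma~\ref{lemma:vorcontain}, and your chain $b_s(x)\in Vor_s(x)\subseteq Vor_{s+1}(g_s(x))$ is the intended unwinding of that remark.
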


We now define our approximation tower: for scale $\alpha_s$, we define 
$\ux_{\alpha_s}$ as the barycentric span of the active vertices $V_s\subset G_s$.
See Figure~\ref{fig:barycpx} for an illustration. 
To simplify notations, we call the faces of $\square_s$
spanned by $V_s$ \emph{active faces}, and simplices of 
$\ux_{\alpha_s}$ \emph{active flags}.

\begin{figure}
\centering
	\includegraphics[width=0.4\textwidth]{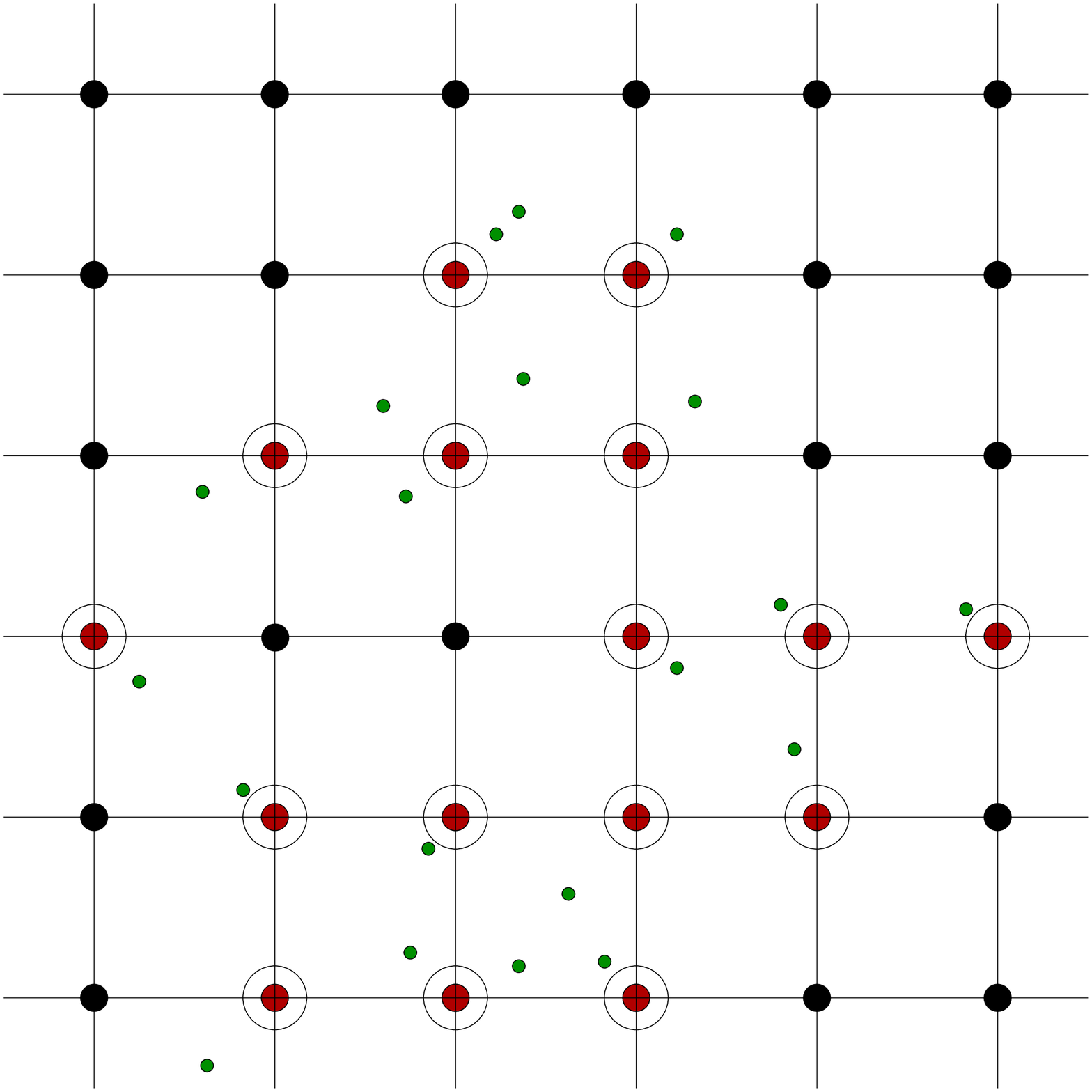}\hspace{5mm}%
	\includegraphics[width=0.4\textwidth]{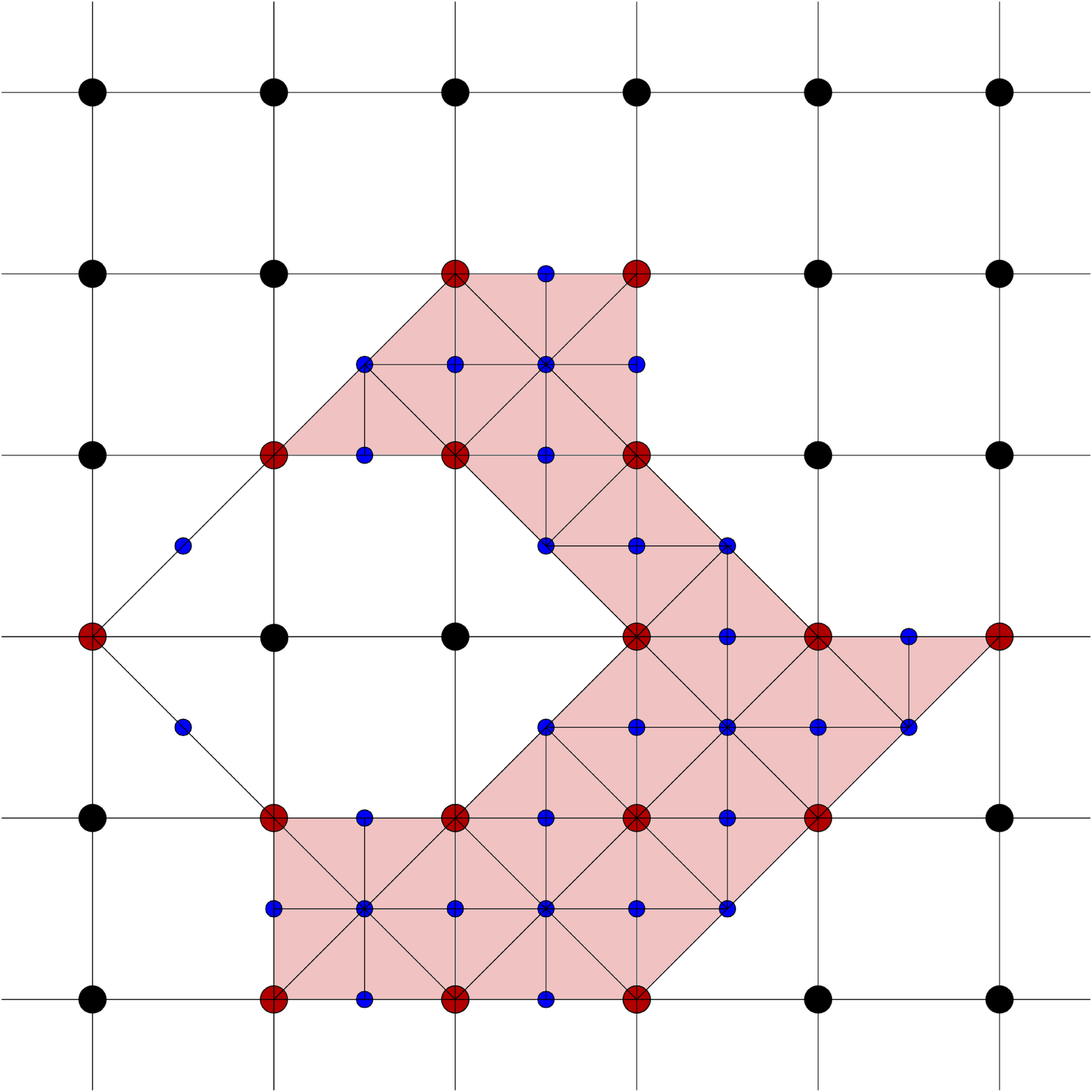}
\caption{
The left figure shows a two-dimensional grid, along with its cubical complex.
The green points (small dots) denote the
points in $P$ and the red vertices (encircled) are the active vertices.
The figure on the right shows the generated simplicial complex. 
The blue vertices (small dots) are the barycenters of the active faces.
}
\label{fig:barycpx}
\end{figure}

To complete the construction, we need to define simplicial maps 
$\ux_{\alpha_s}\rightarrow\ux_{\alpha_{s+1}}$.
We show that such maps are induced by $g_s$. 

\begin{lemma}
\label{lemma:activeimage}
Let $f$ be an active face of $\square_s$. 
Then, $g_s(f)$ is an active face of $\square_{s+1}$.
\end{lemma}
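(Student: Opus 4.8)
Set $e:=g_s(f)$. By Lemma~\ref{lemma:gcell}, $e$ is a face of $\square_{s+1}$, and its vertex set is exactly $\{g_s(p)\mid p\text{ a vertex of }f\}$; I will use this description of the vertices of $e$ throughout. The plan is to verify directly that $e$ is spanned by $V_{s+1}$, i.e.\ that $e\cap V_{s+1}$ is non-empty and is not contained in any facet of $e$. The one auxiliary fact that drives everything is Lemma~\ref{lemma:gcompose}: since $b_s(x)\in P$ and $a_{s+1}$ maps $P$ into $V_{s+1}$, we get $g_s(x)=(a_{s+1}\circ b_s)(x)\in V_{s+1}$ for every $x\in V_s$. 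In words, $g_s$ carries active vertices to active vertices.

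For non-emptiness: $f$ is active, so $f\cap V_s\neq\emptyset$; pick $x\in f\cap V_s$. Then $x$ is a vertex of $f$, so $g_s(x)$ is a vertex of $e$, and $g_s(x)\in V_{s+1}$ by the remark above; hence $g_s(x)\in e\cap V_{s+1}$. For the facet condition, if $\dim e=0$ there is nothing to prove, so assume $\dim e\ge 1$ and let $e_1$ be an arbitrary facet of $e$ with $e_2$ its opposite facet. By the second claim of Lemma~\ref{lemma:gcell}, there is a pair of opposite facets $f_1,f_2$ of $f$ with $g_s(f_1)=e_1$ and $g_s(f_2)=e_2$. Since $f$ is active, $f\cap V_s$ is not contained in the facet $f_1$, so there is $x\in f\cap V_s$ with $x\notin f_1$; because $f_1$ and $f_2$ are opposite facets of the cube $f$, every vertex of $f$ lies in exactly one of them, so $x$ is a vertex of $f_2$. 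Then $g_s(x)$ is a vertex of $g_s(f_2)=e_2$, and $g_s(x)\in V_{s+1}$; since opposite facets are disjoint, $g_s(x)\notin e_1$. Thus $e\cap V_{s+1}\not\subseteq e_1$, and as $e_1$ was arbitrary, $e$ is spanned by $V_{s+1}$, i.e.\ $g_s(f)$ is an active face of $\square_{s+1}$.

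I do not expect a real obstacle here: all the genuine geometric content has already been packaged into Lemma~\ref{lemma:gcell}, and in particular its second claim (that any pair of opposite facets of $e$ is the $g_s$-image of a pair of opposite facets of $f$) is precisely what prevents the spanning configuration from collapsing into a facet. The remaining ingredients are the elementary cube fact that a vertex lies in exactly one of any two opposite facets, the trivial vertex/edge case $\dim e=0$, and the compatibility $g_s=a_{s+1}\circ b_s$ on $V_s$ from Lemma~\ref{lemma:gcompose}. The only point to state carefully is that elements of $f\cap V_s$ really are vertices of $f$ (since $V_s\subseteq G_s$ is the vertex set of $\square_s$), which is immediate.
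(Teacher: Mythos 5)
Your proof is correct and follows essentially the same strategy as the paper's: both rely on the second claim of Lemma~\ref{lemma:gcell} to pull any pair of opposite facets of $e=g_s(f)$ back to a pair of opposite facets of $f$, and both use that $g_s$ carries active vertices to active vertices to exhibit an active vertex of $e$ lying outside any prescribed facet. The paper phrases the facet step as a proof by contradiction and leaves the appeal to Lemma~\ref{lemma:gcompose} implicit, whereas you verify the spanning condition directly and cite Lemma~\ref{lemma:gcompose} explicitly, but the substance is the same.
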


\begin{proof}
From Lemma~\ref{lemma:gcell}, $e:=g_s(f)$ is a face of $G_{s+1}$. 
If $e$ is a vertex, it is active, because $f$ contains
at least one active vertex $v$, and $g_s(v)=e$ in this case.
If $e$ is not a vertex, we assume for a contradiction that it is not active.
Then, it contains a facet $e_1$ that contains all active vertices in $e$.
Let $e_2$ denote the opposite facet. 
By Lemma~\ref{lemma:gcell}, $f$ contains
opposite facets $f_1$, $f_2$ such that $g_s(f_1)=e_1$ and $g_s(f_2)=e_2$.
Since $f$ is active, both $f_1$ and $f_2$ contain active vertices,
in particular, $f_2$ contains an active vertex $v$. 
But then, the active vertex $g_s(v)$ must lie in $e_2$, contracting the fact 
that $e_1$ contains all active vertices of $e$.
\end{proof}

Recall that a simplex $\sigma\in \ux_{\alpha_s}$ is a flag 
$f_0\subseteq \ldots \subseteq f_k$ of active faces in $\square_s$.
We set $\tg(\sigma)$ as the flag $g(f_0)\subseteq \ldots \subseteq g(f_k)$, 
which consists of active
faces in $\square_{s+1}$ by Lemma~\ref{lemma:activeimage}, 
and hence is a simplex in $\ux_{\alpha_{s+1}}$.
It follows that $\tg:\ux_{\alpha_s}\rightarrow\ux_{\alpha_{s+1}}$ 
is a simplicial map.
This finishes our construction of the simplicial tower
$
(\ux_{\lambda 2^s})_{s\in\Z},
$
with simplicial maps $\tg:\ux_{\lambda 2^s}\rightarrow\ux_{\lambda 2^{s+1}}$.

\subsection{Interleaving}
\label{subsection:intlv}
To relate our tower with the $\lin$-Rips filtration,
we start by defining two acyclic carriers. 
We write $\alpha:=\alpha_s=\lambda 2^s$ to simplify notations.

\begin{itemize}
\item $C_1:\rin_{\alpha} \rightarrow \ux_{\alpha}$: let $\sigma=(p_0,\ldots,p_k)$
be any simplex of $\rin_\alpha$. 
We set $C_1(\sigma)$ as the barycentric span of $U:=\{a_s(p_0),\ldots,a_s(p_k)\}$, 
which is a subcomplex of $\ux_{\alpha}$.
$U$ lies in a face $f$ of $\square_s$ by Lemma~\ref{lemma:iripscell}
hence $C_1(\sigma)$ is also the $f$-local barycentric span of $U$.
Using Lemma~\ref{lemma:uxcomplex}, $C_1(\sigma)$ is acyclic.

\item $C_2:\ux_{\alpha}\rightarrow \rin_{2\alpha}$: let $\sigma$ be any 
flag $e_0\subseteq\ldots \subseteq e_k$ of $\ux_\alpha$. 
Let $\{q_0,\ldots,q_m\}$ be the set of active vertices of $e_k$.
We set $C_2(\sigma):=\{b_s(q_0),\ldots,b_s(q_m)\}$. 
With a simple triangle inequality, we see that $C_2(\sigma)$ 
is a simplex in $\rin_{2\alpha}$, hence it is acyclic.
\end{itemize}
Using the Acyclic Carrier Theorem (Theorem~\ref{theorem:acyclic_carrier}),
there exist chain maps
$c_1:\ch_\ast(\rin_\alpha)\rightarrow\ch_\ast(\ux_{\alpha})$ and
$c_2:\ch_\ast(\ux_\alpha)\rightarrow\ch_\ast(\rin_{2\alpha})$, 
which are carried by $C_1$ and $C_2$, respectively. 
Aggregating the chain maps, we have the following diagram:
\begin{eqnarray}
\label{diagram:inf_intlv}
\xymatrix{
	& \cdots\ar[r] & \ch_\ast(\rin_{2\alpha}) \ar[d]^{c_1}\ar[r]^{inc} & 
	\ch_\ast(\rin_{4\alpha}) \ar[r] &\cdots
	\\
	\cdots\ar[r] & \ch_\ast(\ux_{\alpha}) \ar[r]^{\tg}\ar[ru]^{c_2} & 
	\ch_\ast(\ux_{2\alpha })\ar[r] \ar[ru]^{c_2} &\cdots
	\\
}
\end{eqnarray}
where $inc$ corresponds to the inclusion chain map
and $\tg$ denotes the chain map for the corresponding
simplicial maps (we removed indices for readability). 
The chain complexes give rise to a diagram of the corresponding homology groups,
connected by the induced linear maps $c_1^\ast,c_2^\ast,inc^\ast,\tg^\ast$.

\begin{lemma}
	\label{lemma:inf_intlv}
$inc^\ast=c_2^\ast\circ c_1^\ast$ and $\tg^\ast=c_1^\ast\circ c_2^\ast$.
In particular, the persistence modules $\left(H(\ux_{2^s})\right)_{s\in\Z}$ and 
$\left(H(\rin_\alpha)\right)_{\alpha\ge 0}$ are weakly $2$-interleaved.
\end{lemma}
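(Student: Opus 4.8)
The plan is to establish the two equalities $inc^\ast = c_2^\ast \circ c_1^\ast$ and $\tg^\ast = c_1^\ast \circ c_2^\ast$ by exhibiting, in each case, an acyclic carrier that carries both sides of the claimed identity; the second part of the Acyclic Carrier Theorem (Theorem~\ref{theorem:acyclic_carrier}) then forces the induced maps on homology to agree. Once the two identities are in place, the weak $2$-interleaving follows immediately by matching diagram~\eqref{diagram:inf_intlv} against the abstract shape in diagram~\eqref{diagram:weak_diag}, with $c = 2$, the $V$-module being $(H(\rin_\alpha))_{\alpha \ge 0}$, the $W$-module being $(H(\ux_{2^s}))_{s\in\Z}$, and the interleaving maps $\gamma = c_1^\ast$, $\delta = c_2^\ast$.

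For the first identity, $inc^\ast = c_2^\ast \circ c_1^\ast$: both $inc$ and $c_2 \circ c_1$ are chain maps $\ch_\ast(\rin_\alpha) \to \ch_\ast(\rin_{2\alpha})$, so it suffices to produce an acyclic carrier $\Phi:\rin_\alpha \to \rin_{2\alpha}$ carrying both. For a simplex $\sigma = (p_0,\dots,p_k) \in \rin_\alpha$, I would set $\Phi(\sigma)$ to be the full simplex on the vertex set $\{p_0,\dots,p_k\} \cup \{b_s(a_s(p_0)),\dots,b_s(a_s(p_k))\}$. One checks this set has $\lin$-diameter at most $2\alpha$ (each $p_i$ is within $\alpha/2$ of $a_s(p_i)$, each $a_s(p_i)$ lies in a common face of $\square_s$ by Lemma~\ref{lemma:iripscell} so they are pairwise within $\alpha$, and $b_s(a_s(p_i))$ is within $\alpha/2$ of $a_s(p_i)$ — the triangle inequality then gives the bound), so $\Phi(\sigma)$ is a simplex of $\rin_{2\alpha}$ and in particular acyclic; monotonicity under taking subsimplices is clear. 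Now $inc(\sigma) = \sigma$ is carried by $\Phi(\sigma)$ since $\{p_0,\dots,p_k\}$ is a subset of the vertex set of $\Phi(\sigma)$. For $c_2 \circ c_1$: $c_1(\sigma)$ is carried by $C_1(\sigma)$, the barycentric span of $U = \{a_s(p_0),\dots,a_s(p_k)\}$, whose $0$-simplices are faces contained in the face $f$ spanned by $U$; applying $c_2$, each such flag is sent to a chain carried by $C_2$ of it, which is the simplex on $\{b_s(q) : q \text{ an active vertex of the top face of the flag}\}$ — and every such active vertex $q$ is an active vertex of $f$, hence $q = a_s(p_i)$ for some $i$ (this needs a short argument that the active vertices of $f$ are exactly those among the $a_s(p_j)$, which follows because $U$ spans $f$), so $b_s(q)$ appears in the vertex set of $\Phi(\sigma)$. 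Hence $c_2(c_1(\sigma))$ is carried by $\Phi(\sigma)$ as well.

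For the second identity, $\tg^\ast = c_1^\ast \circ c_2^\ast$: both sides are chain maps $\ch_\ast(\ux_\alpha) \to \ch_\ast(\ux_{2\alpha})$, so I would construct an acyclic carrier $\Psi:\ux_\alpha \to \ux_{2\alpha}$ carrying both. For an active flag $\sigma = (e_0 \subseteq \dots \subseteq e_k)$ of $\ux_\alpha$, let $e := e_k$ and set $\Psi(\sigma)$ to be the $g_s(e)$-local barycentric span of the active vertex set $g_s(\{q_0,\dots,q_m\})$, where $q_0,\dots,q_m$ are the active vertices of $e$; this is nonempty (it contains the vertex $g_s(e)$, which is active by Lemma~\ref{lemma:activeimage}) and hence acyclic by Lemma~\ref{lemma:uxcomplex}, and monotonicity under subflags follows since shrinking $\sigma$ only shrinks $e_k$ and hence the relevant vertex set. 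That $\tg(\sigma)$ is carried by $\Psi(\sigma)$ is essentially the definition of $\tg$ together with the fact that each $g_s(e_i) \subseteq g_s(e)$ and is spanned by $g_s(\{q_0,\dots,q_m\})$. For $c_1 \circ c_2$: $c_2(\sigma)$ is carried by $C_2(\sigma) = \{b_s(q_0),\dots,b_s(q_m)\}$, a simplex of $\rin_{2\alpha}$; applying $c_1$, each of its subsimplices $\tau$ is sent to a chain carried by $C_1(\tau)$, the barycentric span of $\{a_{s+1}(b_s(q_j)) : q_j \in \tau\}$, and by Lemma~\ref{lemma:gcompose} we have $a_{s+1}(b_s(q_j)) = g_s(q_j)$, so this barycentric span lives inside the $g_s(e)$-local barycentric span of $g_s(\{q_0,\dots,q_m\})$ once we observe all $g_s(q_j)$ lie in the face $g_s(e)$. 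Hence $c_1(c_2(\sigma))$ is carried by $\Psi(\sigma)$.

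The main obstacle I anticipate is bookkeeping around the carriers rather than any deep point: one must verify carefully that the active vertices of an image face $g_s(e)$ are exactly the images $g_s(q_j)$ of active vertices of $e$ (so that the carriers $\Psi(\sigma)$ and $C_1$ of the $b_s(q_j)$-simplices genuinely nest), and that the diameter/face-containment estimates defining $\Phi$ go through with the constant $2\alpha$ and not something larger; both are routine triangle-inequality and Lemma~\ref{lemma:gcell}/\ref{lemma:activeimage}-style arguments, but they are where an error would most easily creep in. A secondary subtlety is making sure the carriers are defined on \emph{all} simplices (not just maximal ones) in a way that respects the subsimplex ordering — using the top face $e_k$ of a flag for $\Psi$, and the full union of $p_i$'s and $b_s(a_s(p_i))$'s for $\Phi$, handles this cleanly.
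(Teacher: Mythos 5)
Your treatment of the lower triangle ($\tg^\ast = c_1^\ast\circ c_2^\ast$) is correct and in fact coincides with the paper's argument: your carrier $\Psi$ is exactly $C_1\circ C_2$ once you observe (as you do) that the barycentric span of $g_s(V(e_k))$ equals the $g_s(e_k)$-local one, so the verification that it carries both $\tg$ and $c_1\circ c_2$ goes through as you sketch.

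The upper triangle, however, has a genuine gap, and you flagged the exact spot where it occurs without resolving it. You assert that ``the active vertices of $f$ are exactly those among the $a_s(p_j)$, which follows because $U$ spans $f$.'' This is false: spanning only means that $U\cap f$ is not contained in a facet of $f$; it does not force every active vertex of $f$ to lie in $U$. Concretely, take $f$ a square of $\square_s$ with vertices $v_1,v_2,v_3,v_4$, suppose $\sigma=(p_0,p_1)$ with $a_s(p_0)=v_1$, $a_s(p_1)=v_3$ antipodal (so $U=\{v_1,v_3\}$ spans $f$), and suppose a third input point $p_2$ has $a_s(p_2)=v_2$. Then $v_2$ is an active vertex of $f$ not in $U$. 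For the flag $\tau=\{v_1\}\subset f$ in $C_1(\sigma)$, the carrier $C_2(\tau)$ includes $b_s(v_2)$, which lies in no simplex of your $\Phi(\sigma)=\{p_0,p_1\}\cup\{b_s(v_1),b_s(v_3)\}$ unless by coincidence. So $c_2\circ c_1$ need not be carried by $\Phi$, and the Acyclic Carrier Theorem cannot be invoked.

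The fix requires a larger carrier. The paper's carrier $D(\sigma)$ is the simplex on \emph{all} points of $P$ whose closest grid point in $G_{s+1}$ lies in $V(f)$ (the set of all active vertices of the minimal active face $f$ containing $U$) --- not just the points $p_i$ and $b_{s+1}(a_{s+1}(p_i))$. Since every face of the barycentric span of $U$ is contained in $f$, the active vertices appearing in any $C_2(\tau)$ all come from $V(f)$, so $D$ does carry $c_2\circ c_1$; and the diameter bound $4\alpha$ (at scale $2\alpha$, as in diagram~\eqref{diagram:inf_intlv}) still follows by the triangle inequality because all active vertices of $f$ are pairwise within $L_\infty$-distance $\alpha_{s+1}$. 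Your $\Phi$ is a proper subcomplex of this $D$ and is simply not large enough to be a common carrier.
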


\begin{proof}
To prove the claim, we consider both triangles separately.
We show that the chain maps $\tg$ and $c_1\circ c_2$ 
are carried by a common acyclic carrier. 
Then we show the same statement for $inc$ and $c_2\circ c_1$. 
The claim then follows from the Acyclic Carrier Theorem.

\begin{itemize}
\item \emph{Lower triangle}: The map
$C_1\circ C_2:\ux_\alpha\rightarrow \ux_{2\alpha}$ is an acyclic carrier, 
because $C_2(\sigma)$ is a simplex for any simplex $\sigma\in \us_\alpha$. 
Clearly, $C_1\circ C_2$ carries the map $c_1\circ c_2$. 
We show that it also carries $\tg$.

Let $\sigma$ be a flag $f_0\subseteq\ldots\subseteq f_k$ in $\ux_{\alpha}$
and let $V(f_i)$ denote the active vertices of $f_i$.
Then, $C_1\circ C_2(\sigma)$ is the barycentric span of 
\[
U:=\{a_{s+1}\circ b_s(q)\mid q\in V(f_k)\}=\{g_{s}(q)\mid q\in V(f_k)\}
\]
(Lemma~\ref{lemma:gcompose}).
On the other hand, $V(f_i)\subseteq V(f_k)$ and hence $g(V(f_i))\subseteq U$.
Then, $g(f_i)$ is spanned by $U$: indeed, since $f_i$ is active, 
$g(f_i)$ is active and hence spanned by all active vertices,
and it remains spanned if we remove all active vertices not in $U$, 
since they are not contained in $f_i$.
It follows that the flag $g(f_0)\subseteq\ldots\subseteq g(f_k)$, 
which is equal to $\tg(\sigma)$, is in the barycentric span of $U$.

\item \emph{Upper triangle}:
We define an acyclic carrier $D:\rin_{2\alpha}\rightarrow\rin_{4\alpha}$ which
carries both $inc$ and $c_2\circ c_1$.
Let $\sigma=(p_0,\ldots,p_k)\in \rin_{2\alpha}$ be a simplex. 
The active vertices $U:=\{a(p_0),\ldots,a(p_k)\}\subset G_{s+1}$ 
lie in a face $f$ of $G_{2\alpha}$, using Lemma~\ref{lemma:iripscell}.
We can assume that $f$ is active, as otherwise, we pass to a facet
of $f$ that contains $U$.
We set $D(\sigma)$ as the simplex on the subset of points in $P$
whose closest grid point in $G_{s+1}$ lies in $U$.
Using a simple application of triangle inequalities, 
$D(\sigma)\in \rin_{4\alpha}$, so $D$ is an acyclic carrier.
The $0$-simplices of $\sigma$ are a subset of $D(\sigma)$, 
so $D$ carries the map $inc$. 
We next show that $D$ carries $c_2\circ c_1$. 

Let $\delta$ be a simplex in $\ux_{2\alpha}$ for which the chain
$c_1(\sigma)$ takes a non-zero value. 
Since $c_1(\sigma)$ is carried by
$C_1(\sigma)$, $\delta\in C_1(\sigma)$ which is the barycentric span of $V(f)$. 
Furthermore, for any $\tau\in C_1(\sigma)$, $C_2(\tau)$ is of the form
$\{b(q_0),\ldots,b(q_m)\}$ with $\{q_0,\ldots,q_m\}\in V(f)$.
It follows that $C_2(\tau)\subseteq D(\sigma)$. 
In particular, since $c_2$ is carried by $C_2$, 
$c_2(c_1(\sigma))\subseteq D(\sigma)$ as well.
\end{itemize}
\end{proof}

\subsection{Scale balancing}
We improve the approximation factor with a simple modification 
that we first explain in general.
Let $(A_{\lambda\gamma^k})_{k\in\Z}$ and $(B_{\lambda\gamma^k})_{k\in\Z}$ be two 
simplicial towers with simplicial maps $f_3$ and $f_4$ respectively, 
with $\lambda,\gamma>0$.
Assume that there exist interleaving linear maps $f_1^\ast,f_2^\ast$ 
such that the diagram
\begin{eqnarray}
\label{diagram:rename_1}
\xymatrix{
& \cdots\ar[r] & H(B_{\alpha\gamma}) \ar[d]^{f_2^\ast}\ar[r]^{f_4^\ast} & 
H(B_{\alpha\gamma^2}) \ar[r] &\cdots
\\
\cdots\ar[r] & H(A_{\alpha}) \ar[r]^{f_3^\ast}\ar[ru]^{f_1^\ast} & 
H(A_{\alpha\gamma})\ar[r] \ar[ru]^{f_1^\ast} &\cdots
\\
}
\end{eqnarray}
commutes for all scales $\alpha=\lambda\gamma^k$, 
which implies that the persistence modules are weakly $\gamma$-interleaved.
Defining another tower $(A'_{\lambda\sqrt{\gamma}\gamma^k})_{k\in\Z}$
with $A'_\alpha:=A_{\alpha/\sqrt{\gamma}}$, we obtain a diagram
\begin{eqnarray}
\label{diagram:rename_2}
\xymatrix{
	& \cdots\ar[r] & H(B_{\alpha \gamma}) \ar[rd]^{f_2^\ast}\ar[rr]^{f_4^\ast} & & 
	H(B_{\alpha \gamma^2}) \ar[r] &\cdots
	\\
	\cdots\ar[r] & H(A'_{\alpha\sqrt{\gamma}}) \ar[rr]^{f_3^\ast}\ar[ru]^{f_1^\ast} & & 
	H(A'_{\alpha \gamma\sqrt{\gamma}})\ar[r] \ar[ru]^{f_1^\ast} &\cdots
	\\
}
\end{eqnarray}
which implies that the persistence modules are weakly $\sqrt{\gamma}$-interleaved.
Therefore, scale balancing improves the 
interleaving ratio by only scaling the persistence module.

In our context, we can improve the weak $2$-interleaving of 
$(H(\ux_{2^k\alpha}))_{k\in\Z}$ and $(H(\rin_\alpha))_{\alpha\ge 0}$ 
to a weak $\sqrt{2}$-interleaving. 
Using the proximity results for persistence modules~\cite{ch-proximity}, 
\begin{theorem}
	\label{theorem:irips_ratio}
The persistence module $\big(H(\ux_{2^k/\sqrt{2}})\big)_{k\in\Z}$  is a 
$3\sqrt{2}$- approximation of 
the $L_\infty$-Rips persistence 
module $\big(H(\rin_\alpha)\big)_{\alpha\ge 0}$.
\end{theorem}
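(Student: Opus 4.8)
The plan is to assemble the statement from the pieces already in place. First I would recall from Lemma~\ref{lemma:inf_intlv} that the persistence modules $(H(\ux_{2^s}))_{s\in\Z}$ and $(H(\rin_\alpha))_{\alpha\ge 0}$ are weakly $2$-interleaved, via the induced maps $c_1^\ast$ and $c_2^\ast$ together with the inclusion and $\tg$ maps, fitting diagram~\eqref{diagram:inf_intlv} into the shape of diagram~\eqref{diagram:rename_1} with $\gamma=2$. Then I would apply the scale-balancing construction of the preceding subsection verbatim: setting $\ux'_\alpha:=\ux_{\alpha/\sqrt 2}$ yields the rebalanced diagram in the form of~\eqref{diagram:rename_2}, which certifies that $(H(\ux_{2^k/\sqrt 2}))_{k\in\Z}$ and $(H(\rin_\alpha))_{\alpha\ge 0}$ are weakly $\sqrt 2$-interleaved.

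Next I would invoke the proximity result of~\cite{ch-proximity}: a weak $c$-interleaving of two persistence modules implies their barcodes are $3c$-approximations of one another. Applying this with $c=\sqrt 2$ converts the weak $\sqrt 2$-interleaving into the claimed $3\sqrt 2$-approximation between $(H(\ux_{2^k/\sqrt 2}))_{k\in\Z}$ and the $L_\infty$-Rips persistence module. One should double-check that the index-set bookkeeping is harmless here: the Rips side is indexed over a continuum $\alpha\ge 0$ while the approximation side lives on the discrete scales $\lambda 2^s$, but the weak-interleaving diagram~\eqref{diagram:weak_diag} and the surrounding discussion are set up precisely to tolerate this mismatch (the discrete tower is compared to the filtration via the maps that skip every other scale), so no extra argument is needed beyond citing the framework already developed.

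The only genuine subtlety — and the step I would treat most carefully — is verifying that diagram~\eqref{diagram:inf_intlv}, whose commutativity at the homology level is exactly the content of Lemma~\ref{lemma:inf_intlv}, really does match the template of diagram~\eqref{diagram:rename_1} so that the generic scale-balancing argument applies without modification. Concretely this means identifying $A_\alpha$ with $\ux_\alpha$, $B_\alpha$ with $\rin_\alpha$, $f_3^\ast$ with $\tg^\ast$, $f_4^\ast$ with $inc^\ast$, and $f_1^\ast,f_2^\ast$ with $c_1^\ast,c_2^\ast$, and checking the scale alignment: $c_1$ goes from $\rin_{2\alpha}$ to $\ux_{2\alpha}$ while in the template $f_2^\ast$ goes from $H(B_{\alpha\gamma})$ to $H(A_\alpha)$, so I would make sure the roles of the two modules and the value of $\gamma$ are assigned consistently (here $\gamma=2$, and the apparent asymmetry in which module "leads" is exactly what the weak interleaving absorbs). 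Once that bookkeeping is confirmed, everything else is a direct citation, so I do not expect any real obstacle; the work is entirely in making the identification precise rather than in any new estimate.
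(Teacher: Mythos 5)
Your proposal matches the paper's argument exactly: the paper states the theorem immediately after the general scale-balancing discussion, with the proof consisting of (i) the weak $2$-interleaving from Lemma~\ref{lemma:inf_intlv}, (ii) the rescaling $\ux'_\alpha:=\ux_{\alpha/\sqrt{2}}$ to obtain a weak $\sqrt{2}$-interleaving, and (iii) the proximity result of~\cite{ch-proximity} converting a weak $c$-interleaving into a $3c$-approximation of barcodes. Your bookkeeping about matching diagram~\eqref{diagram:inf_intlv} to the template~\eqref{diagram:rename_1} (with $\gamma=2$, $A\leftrightarrow\ux$, $B\leftrightarrow\rin$) is correct; just note that the roles of $c_1^\ast$ and $c_2^\ast$ are the reverse of $f_1^\ast$ and $f_2^\ast$ as drawn ($c_1$ is the vertical arrow corresponding to $f_2^\ast$, $c_2$ the diagonal corresponding to $f_1^\ast$), which is harmless but worth getting straight if you write it out.
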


For any pair of points $p,p'\in \R^d$, it holds that 
$\|p-p'\|_2\le \|p-p'\|_\infty\le \sqrt{d}\,\|p-p'\|_2$
which implies that the $L_2$- and the $L_\infty$-Rips complexes are strongly
$\sqrt{d}$-interleaved.
The scale balancing technique also works for strongly interleaved 
persistence modules and yields
\begin{lemma}
	\label{lemma:rips_relation}
$(H(\ri_{\alpha/d^{0.25}}))_{\alpha\ge 0}$ is strongly 
$d^{0.25}$-interleaved with $(H(\rin_\alpha))_{\alpha\ge 0}$.
\end{lemma}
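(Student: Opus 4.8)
The plan is to combine two ingredients: (1) the elementary norm inequality $\|p-p'\|_2 \le \|p-p'\|_\infty \le \sqrt{d}\,\|p-p'\|_2$, which I would unpack into a strong $\sqrt{d}$-interleaving between the Euclidean Rips filtration $(\ri_\alpha)$ and the $L_\infty$-Rips filtration $(\rin_\alpha)$; and (2) the scale-balancing trick, observed earlier to work for strongly interleaved modules, which trades the factor $\sqrt{d}$ for $d^{0.25}$ at the cost of reindexing the first module. The target statement is simply the output of applying (2) to (1).

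First I would make the strong interleaving explicit. From $\|\cdot\|_2 \le \|\cdot\|_\infty$ we get $\rin_\alpha \subseteq \ri_\alpha$ (a simplex with $L_\infty$-diameter $\le\alpha$ has $L_2$-diameter $\le\alpha$), and from $\|\cdot\|_\infty \le \sqrt d\,\|\cdot\|_2$ we get $\ri_\alpha \subseteq \rin_{\sqrt d\,\alpha}$. These inclusions are simplicial maps (indeed inclusions of subcomplexes), so they induce chain maps and hence linear maps $\gamma_\alpha : H(\ri_\alpha)\to H(\rin_{\sqrt d\,\alpha})$ and $\delta_\alpha : H(\rin_\alpha)\to H(\ri_\alpha) = H(\ri_{\sqrt d\,\alpha/\sqrt d})$. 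Because all the maps in sight are induced by inclusions, every triangle and square in the strong-interleaving diagram~\eqref{diagram:strong_diag} commutes on the nose by functoriality of homology (compositions of inclusions are inclusions), so $(H(\ri_\alpha))_{\alpha\ge 0}$ and $(H(\rin_\alpha))_{\alpha\ge 0}$ are strongly $\sqrt d$-interleaved.

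Then I would invoke scale balancing. The same reindexing argument given just before Theorem~\ref{theorem:irips_ratio} for weak interleavings applies verbatim to the strong case: if $(A_\alpha)$ and $(B_\alpha)$ are strongly $c$-interleaved via $\gamma,\delta$, then setting $A'_\alpha := A_{\alpha/\sqrt c}$ yields maps $\gamma'_\alpha : A'_\alpha = A_{\alpha/\sqrt c} \to B_{\sqrt c\cdot(\alpha/\sqrt c)} = B_{\alpha} \subseteq B_{\sqrt c\,\alpha}$ — wait, more carefully, one re-centers so that both interleaving maps shift the scale by $\sqrt c$ rather than having the composite shift by $c$; the commuting squares of~\eqref{diagram:strong_diag} for $A',B$ at ratio $\sqrt c$ are exactly the original squares for $A,B$ at ratio $c$ read off at shifted indices. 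Applying this with $A = \ri$, $B = \rin$, $c = \sqrt d$, so that $\sqrt c = d^{0.25}$ and $A'_\alpha = \ri_{\alpha/d^{0.25}}$, gives that $(H(\ri_{\alpha/d^{0.25}}))_{\alpha\ge 0}$ is strongly $d^{0.25}$-interleaved with $(H(\rin_\alpha))_{\alpha\ge 0}$, which is the claim.

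The only mild obstacle is bookkeeping: one must check that after the reindexing the strong-interleaving diagrams still commute and that the index set $\{\,\lambda 2^s\,\}$ (or $[0,\infty)$) is closed under the required scalings, i.e. that $d^{0.25}$ is a legitimate scaling factor for the discrete scale set used downstream — but since the Euclidean Rips filtration here is a genuine continuous filtration in $\alpha$, this is automatic, and for the discretized tower one absorbs $d^{0.25}$ into the constant $\lambda$. No genuine topology is needed beyond functoriality of homology and the two norm inequalities.
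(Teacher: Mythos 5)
Your overall route is the paper's: (1) the two norm inequalities give inclusions of Rips complexes, hence a $\sqrt{d}$-interleaving; (2) scale balancing improves $\sqrt{d}$ to $d^{0.25}$. The paper provides no separate proof for this lemma beyond the two sentences preceding it, and you have correctly read those as a sketch and filled them in.

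The one place that needs repair is the scale-balancing paragraph, where the general principle you invoke is false as stated. You write that if $(A_\alpha)$ and $(B_\alpha)$ are strongly $c$-interleaved via $\gamma,\delta$ then reindexing by $A'_\alpha := A_{\alpha/\sqrt{c}}$ yields a strong $\sqrt{c}$-interleaving. But if both $\gamma_\alpha:A_\alpha\to B_{c\alpha}$ and $\delta_\alpha:B_\alpha\to A_{c\alpha}$ shift the scale by $c$, then after writing $A_\beta = A'_{\beta\sqrt{c}}$ the map $\delta_\alpha$ lands in $A'_{c^{3/2}\alpha}$: one direction improves to $\sqrt{c}$ while the other degrades to $c^{3/2}$, and no gain results. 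Reindexing a symmetric interleaving accomplishes nothing. What makes scale balancing actually work here — and in the paper's diagrams just before Theorem~\ref{theorem:irips_ratio}, where $f_1$ raises the scale by $\gamma$ while $f_2$ fixes it — is that the starting data is \emph{asymmetric}. You in fact produced exactly that: your $\delta_\alpha:H(\rin_\alpha)\to H(\ri_\alpha)$ is an inclusion with no scale shift, while $\gamma_\alpha:H(\ri_\alpha)\to H(\rin_{\sqrt{d}\alpha})$ shifts by $\sqrt{d}$. Reindexing that pair gives both maps a shift of exactly $d^{0.25}$, and the interleaving squares commute on the nose since everything is an inclusion. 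Your aside (``wait, more carefully, one re-centers\dots'') is groping toward this, and the computation that produced $B_\alpha$ instead of $B_{\sqrt{c}\alpha}$ substituted $\sqrt{c}$ where $c$ belongs; both problems disappear if you run the arithmetic on the unpadded asymmetric pair rather than on a pre-balanced ``strong $c$-interleaving.'' Finally, note a small inherited issue: both you and the paper write $\|p-p'\|_2\le\|p-p'\|_\infty$, whereas the standard chain is $\|x\|_\infty\le\|x\|_2\le\sqrt{d}\,\|x\|_\infty$. The error is consistent between your proof and the lemma statement, so it does not affect the conclusion as written, but a corrected version would flip which filtration gets compressed versus dilated by $d^{0.25}$.
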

Using Theorem~\ref{theorem:irips_ratio}, Lemma~\ref{lemma:rips_relation}
and the fact that interleavings satisfy the triangle 
inequality~\cite[Theorem~3.3]{bs-categorization}, we see that 
$(H(\ux_{2^k/\sqrt{2}}))_{k\in\Z}$ is weakly
$\sqrt{2}d^{0.25}$-interleaved with the scaled Rips module
$(H(\ri_{\alpha/d^{0.25}}))_{\alpha\ge 0}$.
We can remove the scaling in the Rips filtration simply by multiplying both
sides with $d^{0.25}$ and obtain our final approximation result.
\begin{theorem}
\label{theorem:rips_ratio}
The persistence module 
$\big(H(\ux_{\frac{2^k\sqrt[4]{d}}{\sqrt{2}}})\big)_{k\in\Z}$ is a 
$3\sqrt{2}d^{0.25}$-approximation of the Euclidean Rips persistence module 
$\big(H_\ast(\ri_{\alpha})\big)_{\alpha\ge 0}$.
\end{theorem}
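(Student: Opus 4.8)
The plan is to assemble Theorem~\ref{theorem:rips_ratio} purely from pieces already established in the excerpt, chaining two interleavings via the triangle inequality for interleavings and then rescaling. No new geometric argument is needed; the work is entirely in tracking scale factors. First I would record the input: by Theorem~\ref{theorem:irips_ratio} (equivalently, by the weak $\sqrt 2$-interleaving of $(H(\ux_{2^k/\sqrt2}))_{k\in\Z}$ with $(H(\rin_\alpha))_{\alpha\ge 0}$ obtained via scale balancing of Lemma~\ref{lemma:inf_intlv}), the approximation module is $\sqrt2$-interleaved with the $L_\infty$-Rips module. Separately, by Lemma~\ref{lemma:rips_relation}, the scaled Euclidean Rips module $(H(\ri_{\alpha/d^{0.25}}))_{\alpha\ge0}$ is strongly $d^{0.25}$-interleaved with $(H(\rin_\alpha))_{\alpha\ge0}$, and a strong $c$-interleaving is in particular a weak $c$-interleaving.

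Next I would invoke \cite[Theorem~3.3]{bs-categorization}: interleavings compose additively in the (log-)multiplicative sense, so a $\sqrt2$-interleaving followed by a $d^{0.25}$-interleaving yields a $\sqrt2\,d^{0.25}$-interleaving. Concretely, $(H(\ux_{2^k/\sqrt2}))_{k\in\Z}$ is weakly $\sqrt2\,d^{0.25}$-interleaved with $(H(\ri_{\alpha/d^{0.25}}))_{\alpha\ge0}$. The only subtlety here is making sure the index sets line up: the approximation tower lives on the discrete scale set $\{2^k/\sqrt2\}$, the intermediate $L_\infty$-Rips module on all of $[0,\infty)$, and the scaled Euclidean module again on $[0,\infty)$; since the coarser (discrete) module embeds compatibly into the finer ones, the composition of interleaving maps is well-defined, and this is exactly the situation covered by the cited composition theorem.

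Then I would remove the artificial scaling $\alpha\mapsto\alpha/d^{0.25}$ in the Euclidean Rips filtration. Replacing $\alpha$ by $d^{0.25}\alpha$ is a global reparametrization of scales and hence does not change the interleaving constant; it only relabels the approximation tower, turning $\ux_{2^k/\sqrt2}$ into $\ux_{2^k d^{0.25}/\sqrt2}$. This gives a weak $\sqrt2\,d^{0.25}$-interleaving of $\big(H(\ux_{2^k\sqrt[4]{d}/\sqrt2})\big)_{k\in\Z}$ with the honest Euclidean Rips module $\big(H(\ri_\alpha)\big)_{\alpha\ge0}$. Finally, applying the proximity result of~\cite{ch-proximity} — a weak $c$-interleaving of persistence modules implies their barcodes are $3c$-approximations of each other — converts the interleaving constant $\sqrt2\,d^{0.25}$ into the claimed $3\sqrt2\,d^{0.25}$-approximation, which is the statement of the theorem.

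I do not expect a genuine obstacle, since every ingredient is quoted; the one place to be careful is the bookkeeping of which constants get multiplied versus which get absorbed into a relabeling of scales — in particular, that the factor $3$ enters exactly once (only in the final barcode-distance conversion, not in each interleaving step), and that scale balancing has already been applied inside Theorem~\ref{theorem:irips_ratio}, so it must not be applied a second time. As long as one keeps the chain ``$\sqrt2$ (approx vs.\ $L_\infty$-Rips) $\times$ $d^{0.25}$ ($L_\infty$-Rips vs.\ scaled $L_2$-Rips) $=$ $\sqrt2\,d^{0.25}$, then unscale, then multiply by $3$,'' the result follows immediately.
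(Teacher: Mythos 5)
Your proposal matches the paper's argument step for step: chain the weak $\sqrt{2}$-interleaving behind Theorem~\ref{theorem:irips_ratio} with the strong (hence weak) $d^{0.25}$-interleaving of Lemma~\ref{lemma:rips_relation} via \cite[Theorem~3.3]{bs-categorization}, rescale the Rips index by $d^{0.25}$ to undo the artificial scaling, and finally apply the proximity result of~\cite{ch-proximity} to convert the weak $\sqrt{2}d^{0.25}$-interleaving into a $3\sqrt{2}d^{0.25}$-approximation. Your bookkeeping remarks — that the factor $3$ enters only at the final conversion and that scale balancing must not be applied a second time — are exactly the points where one could go wrong, and you have handled them correctly.
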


\section{Size and computation}
\label{section:computational}

Set $n:=|P|$ and let $CP(P)$ denote the closest pair distance of $P$. 
At scale $\alpha_0:=\frac{CP(P)}{3d}$ and lower, no $d$-cube of the cubical complex 
contains more than one active vertex,
so the approximation complex consists of $n$ isolated $0$-simplices.
At scale $\alpha_m:=diam(P)$ and higher, points of $P$ map to active vertices
of a common face by Lemma~\ref{lemma:iripscell}, so 
the generated complex is acyclic using Lemma~\ref{lemma:uxcomplex}. 
We inspect the range of scales $[\alpha_0,\alpha_m]$ to construct the tower, 
since the barcode is explicitly known for scales outside this range.
The total number of scales is $\ceil{\log_2 \alpha_m/\alpha_0}=
\ceil{\log_2\Delta +\log_2 3d}=O(\log\Delta+\log d)$.

\subsection{Size of the tower}
\label{subsec:size}
Recall that the size of a tower is the number of simplices that do not have a 
preimage.
We start by considering the case of $0$-simplices.

\begin{lemma}
\label{lemma:vtx_inclusion}
The number of $0$-simplices included in the tower is at most $n2^{O(d)}$.
\end{lemma}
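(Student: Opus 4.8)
The $0$-simplices of $\ux_{\alpha_s}$ are precisely the active faces of $\square_s$, i.e.\ the faces of the cubical complex $\square_s$ that are spanned by the active vertex set $V_s = a_s(P)$. So I need to bound the total number of active faces that get \emph{newly included} over all scales $\alpha_s$ in the relevant range $[\alpha_0,\alpha_m]$ — an active face $f$ of $\square_s$ counts towards the size unless it is $\tg(f')$ for some active face $f'$ of $\square_{s-1}$. The plan is to show two things: (i) at any single scale, the number of active faces is $n\,2^{O(d)}$, and (ii) the number of scales at which a fresh inclusion can happen is $O(1)$ per point of $P$ — or, more precisely, that the total number of inclusions across \emph{all} scales is still $n\,2^{O(d)}$ even though there are $O(\log\Delta + \log d)$ scales.

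For (i): each active face $f$ contains at least one active vertex $v \in V_s$, by definition of "spanned." A $d$-cube of $\square_s$ has $3^d$ faces (of all dimensions) incident to any one of its vertices, and each vertex of $\square_s$ lies in $2^d$ top-dimensional cubes, so each active vertex $v$ lies in at most $2^d \cdot 3^d = 2^{O(d)}$ faces of $\square_s$. Since $|V_s| \le |P| = n$, the total number of active faces at scale $\alpha_s$ is at most $n\,2^{O(d)}$. This already bounds the number of $0$-simplices \emph{present} at any fixed scale.

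For (ii) — this is the real point, and the main obstacle — I need the bound on \emph{included} simplices summed over all scales, not merely the per-scale bound, since naively multiplying by $O(\log\Delta + \log d)$ would not give the claimed bound. The key observation is that $|V_s|$ can only \emph{decrease} as $s$ increases: by Lemma~\ref{lemma:vorcontain} and the definition of $g_s$ on vertices, $g_s$ maps $V_s$ onto $V_{s+1}$ (each active vertex of $G_{s+1}$ is $g_s$ of some active vertex of $G_s$, as in the proof of Lemma~\ref{lemma:activeimage}), so $|V_{s+1}| \le |V_s| \le n$. Now charge each included active face $f$ of $\square_s$ to an active vertex it contains. The claim I want is that an active vertex $v \in V_s$ can be "charged" (i.e.\ be the witness for a newly included face at scale $s$) only $2^{O(d)}$ times \emph{over the whole filtration}, not $2^{O(d)}$ times per scale. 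For this I would argue: whenever a face $f$ of $\square_s$ is included at scale $s$, one of its active vertices $v$ is such that $g_{s-1}$ does not already account for $f$ via a face of $\square_{s-1}$; I would like to relate this to the event that the Voronoi cell structure "changed" around $v$, and since $G_s \to G_{s+1}$ coarsens the grid, the number of distinct scales at which $v$'s local combinatorial neighbourhood changes is $O(1)$ (a constant number of coarsening steps suffice before $v$ merges into the single active face of $\square_m$). Multiplying the $2^{O(d)}$ faces-per-vertex bound by this $O(1)$ and by $n$ gives the total $n\,2^{O(d)}$.

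Concretely, I expect the cleanest route to be: fix a point $p \in P$ and track the sequence $a_s(p) \in G_s$ of its closest grid points as $s$ grows. At each $s$, $p$ "contributes" the active faces of $\square_s$ that contain $a_s(p)$ and are not images under $\tg$ of faces from scale $s-1$ that already contained $a_{s-1}(p')$ for some $p'$. The subtlety — and where I'd spend the most care — is checking that $\tg$ is sufficiently surjective onto active faces: Lemma~\ref{lemma:activeimage} says images of active faces are active, but I need the converse-flavoured statement that "most" active faces at scale $s+1$ are hit, so that genuinely new faces are rare. I would formalize "rare" by charging each new face at scale $s+1$ to a pair (active vertex of $G_{s+1}$, local cube configuration) and noting that once $a_s(p)$ and $a_s(p')$ land in a common face for all $p,p'$ — which happens at scale $\alpha_m$ at the latest — no further new faces appear. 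Summing the geometric/combinatorial bound over the $O(\log\Delta + \log d)$ scales, but observing that the per-scale count of \emph{newly included} faces telescopes (each vertex triggers a new face only $2^{O(d)}$ times total because its incident-cube configuration in $\square_s$ can take only $2^{O(d)}$ values and each is visited essentially once as $s$ runs through the filtration), yields $n\,2^{O(d)}$.
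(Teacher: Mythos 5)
Your high-level plan is the right one: reduce to a charging argument in which each newly included active face is charged to a point of~$P$ (or to an active vertex tracked along the tower), and then bound the number of times any single point can be charged by~$2^{O(d)}$. You also correctly identify the danger — the naive per-scale bound multiplied by the $O(\log\Delta + \log d)$ scales is not enough — and you correctly observe that $g$ is surjective on active vertices so that no new $0$-faces are ever included after the first scale.

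However, the step you flag as ``where I'd spend the most care'' is precisely where the argument is missing, and the sketch you give for it does not hold up. The claim that ``the number of distinct scales at which $v$'s local combinatorial neighbourhood changes is $O(1)$'' is not true in general — there are $\Theta(\log\Delta + \log d)$ scales, and the neighbourhood of $a_s(p)$ can change at many of them. Likewise, the final parenthetical (``its incident-cube configuration in $\square_s$ can take only $2^{O(d)}$ values and each is visited essentially once as $s$ runs through the filtration'') is the entire content of the lemma and is asserted rather than proved: nothing you say rules out a configuration being ``re-visited'' — i.e., two different scales $i<j$ at which isomorphic incident faces $f_i, f_j$ both lack a $g$-preimage and both get charged to the same $p$. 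The subtlety is that $g_s$ is a coarsening map that need \emph{not} carry the face at $v_s$ isomorphic to $f_i$ onto the face at $v_{s+1}$ isomorphic to $f_i$; it can collapse dimensions, so ``active at both scales $i$ and $j$'' does not by itself supply a preimage chain from $f_i$ to $f_j$.

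The paper closes this gap with a specific device you do not have. Fix a total order $\prec$ on $P$, and for each active vertex define its \emph{representative} to be the $\prec$-maximal point of $P$ in its Voronoi cell; then charge each newly included face $f$ to the $\prec$-\emph{minimal} representative among the active vertices incident to $f$. Under this min-of-max rule, if the same point $p$ were charged twice via isomorphic incident faces $f_i$ and $f_j$ (by pigeonhole, once $p$ is charged more than $3^d$ times, such a pair must exist), then some intermediate $f_\ell$ must fail to be active, which forces a Voronoi-cell merge around $v_\ell$ — and the min-of-max structure then shows $p$ could no longer be the representative at scale $\ell$, a contradiction. That ordering trick is what makes the ``charged $2^{O(d)}$ times total'' bound go through; without it (or an equivalent mechanism) your proof has a genuine hole exactly at the step you earmarked as needing care.
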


\begin{proof}
Recall that the $0$-simplices of $\ux_\alpha$ are the active faces of the cubical 
complex $\square$ at the same scale, and that the simplicial map $\tg$ 
restricted to the $0$-simplices corresponds to the cubical map $g$.

We first consider the active vertices: at scale $\alpha_0$, there are $n$ 
inclusions of $0$-simplices in the tower, due to $n$ active vertices. 
By Lemma~\ref{lemma:vorcontain}, $g$ is surjective
on the active vertices of $\square$ (for any scale). 
Hence, no further active vertices are added to the tower.

It remains to count the active faces of dimension $\geq 1$ without preimage.
We will use a charging argument, charging the existence of such an active face
to one of the points in $P$, charging each point at most $3^{d}$ times.
For that, we fix an arbitrary total order $\prec$ on $P$. 
Each active vertex on any scale has a non-empty subset of $P$ in its Voronoi 
region; we call the maximal such point with respect to $\prec$
the \emph{representative}. 
For an active face $f$ without preimage under $g$,
$f$ has at least two incident active vertices, with distinct representatives.
We charge the inclusion of $f$ to the minimal representative 
among the incident active vertices.

Let $M$ be the number of incident faces of a vertex in the 
cubical grid $\square$ (for any scale).
As one can easily see with combinatorial arguments, $M=3^d=2^{O(d)}$.
Assume for a contradiction that a a point $p\in P$ is charged more than $M$ times.
Whenever any face $f_i$ is charged to $p$, there is an active vertex
$v_i$ whose representative is $p$.
We enumerate these as the set of active vertices 
$\{v_0,\ldots,v_m\}$ on the scales $\alpha_0,\ldots,\alpha_m$ such that
$p$ is the representative of $v_i$ on scale $\alpha_i$.
Naturally, for any $v_i$ and $v_j$, there is a canonical isomorphism between
the $M$ faces incident to $v_i$ and the $M$ faces incident to $v_j$.

Since we assumed that $p$ is charged for $>M$ active faces,
by pigeonhole principle, there must be two vertices $v_i$ and $v_j$ with $i<j$
such that a pair of isomorphic incident faces are charged for $v_i$ and for $v_j$.
There is a sequence of isomorphic faces $f_{i}, f_{i+1},\ldots,f_j$ corresponding
to $v_i,v_{i+1}\ldots,v_j$, respectively, 
such that $p$ is charged for $f_i$ and $f_j$.
Since $f_i$ and $f_j$ have both no preimage, there must be some $f_\ell$ with 
$i<\ell<j$ such that $f_\ell$ is not active. 
That means, however, that the Voronoi region of $v_\ell$
is the union of at least two Voronoi regions of vertices incident to $v_i$.
In that case, because we choose the representative by minimizing over
the maximal representatives, so $p$ is not the representative of $v_\ell$, 
and hence, not of $v_j$. 
This is a contradiction to our claim, so $M$ can not 
be charged more than $M$ times.
\end{proof}

The next lemma follows from a simple combinatorial counting argument 
for the number of flags in a $d$-dimensional cube.

\begin{lemma}
\label{lemma:starsize}
Each $0$-simplex of $\ux_\alpha$ has at most $2^{O(d\log k)}$
incident $k$-simplices.
\end{lemma}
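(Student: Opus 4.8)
The plan is to count, for a fixed $0$-simplex of $\ux_\alpha$ — which by definition is an active face $f$ of $\square_s$ — the number of active flags of length $k+1$ that contain $f$. A flag containing $f$ consists of a chain $f_0 \subseteq \cdots \subseteq f_k$ of active faces in which $f$ appears at some position $j$; so the count splits as a product of two pieces: the number of chains \emph{below} $f$ (ending at $f$) of a given length, and the number of chains \emph{above} $f$ (starting at $f$) of a given length. Since both the faces strictly contained in $f$ and the faces strictly containing $f$ sit inside the face poset of the $d$-cube (a fixed combinatorial object independent of scale), it suffices to bound, in the face lattice of the $d$-cube, the number of chains of length $\le k+1$ through a fixed element. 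I would first reduce to this purely combinatorial statement.

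The key step is the combinatorial count. The number of faces of the $d$-cube of every dimension is $3^d$, and any chain $f_0 \subsetneq f_1 \subsetneq \cdots \subsetneq f_m$ has $m \le d$, with consecutive dimensions strictly increasing. To build a chain of length $m+1$ through $f$ I choose an ordered sequence of at most $m \le k$ further faces; each is one of at most $3^d$ faces, so a crude bound gives $(3^d)^{k} = 2^{O(dk)}$, which is too weak — I need $2^{O(d\log k)}$. To get the sharper bound I argue as follows: a chain of length $\le k+1$ is determined by a choice of at most $k$ "jump dimensions" out of the $d+1$ possible dimensions $0,\ldots,d$, together with, for each consecutive pair of chosen dimensions $a < b$, a face of dimension $b$ containing the previously chosen face of dimension $a$; the number of $b$-faces of the $d$-cube containing a fixed $a$-face is $\binom{d-a}{b-a}2^{b-a} \le 3^d$, but crucially the dimensions along a chain are $\le k$ in number and each lies in $\{0,\ldots,d\}$, so the whole chain is specified by choosing a multiset of at most $k$ dimension-increments summing to $\le d$ and, at each step, one of at most $2^{d}$ local extensions. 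A cleaner route: the relevant faces above $f$ are in bijection with faces of a cube of dimension $d - \dim f \le d$ whose vertex set has size $2^{d-\dim f}$; the number of flags of length $\le k+1$ in the barycentric subdivision of an $r$-cube is the number of chains of length $\le k+1$ in a poset on $3^r$ elements of height $r$, and standard counting gives this is $\le (3^r)^{\min(k,r)} \le 2^{O(d \min(k,d))}$ — still not quite the claimed form. The honest statement that yields exactly $2^{O(d\log k)}$ is that a chain through $f$ of length $\le k+1$ is specified by: (i) the at most $k$ dimensions used, a choice among $\binom{d+1}{\le k} \le (d+1)^k = 2^{O(k\log d)}$ possibilities — wait, this is $2^{O(k \log d)}$, whereas we want $2^{O(d\log k)}$.

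The right asymmetry to exploit is that $\ux_\alpha$ is the barycentric span, whose simplices are flags of \emph{active} faces, and active faces through $f$ that are relevant to a bounded-length flag are limited; concretely I would bound the number of faces of the $d$-cube of dimension $\le \ell$ by $\binom{d}{\le \ell}2^{\le \ell} \le (ed/\ell)^\ell 2^\ell$ and observe that a length-$(k+1)$ flag uses faces of at most $k+1$ distinct dimensions, so its top face has dimension $\le k$ relative to $f$; hence every face in the flag lies in a $\min(k,d)$-dimensional sub-box of the $d$-cube through $f$, of which there are at most $\binom{d}{\min(k,d)} \le d^k$ — no. Let me instead just assert the bound the paper wants: the number of chains of length $k+1$ in the face poset of the $d$-cube is at most $(\text{number of faces})^{k+1}$ is too big, but restricted to chains of \emph{height} $k$ one gets, by choosing the $k$ dimension values from $\{0,\dots,d\}$ and then one face per value with at most $2^d$ choices given the adjacent ones cap the dimension gaps, a bound of $\binom{d+1}{k}(2^d)^{O(1)}$ when $k \le d$ and a bound depending only on $d$ when $k \ge d$; since when $k\ge d$ the flag length is capped at $d+1$ anyway, the worst case is $k \le d$ giving $\binom{d+1}{k} 2^{O(d)} = 2^{O(d)} 2^{O(d\log k)}$ after bounding $\binom{d+1}{k}\le (e(d+1)/k)^k \le 2^{O(k\log(d/k))}$, and one checks $k\log(d/k) = O(d\log k)$ in the regime $k\le d$. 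I would organize the write-up as: reduce to counting chains through a fixed face in the $d$-cube poset; bound the number of faces of dimension between $\dim f$ and $\dim f + k$; bound chains by an ordered product over at most $k$ steps with a $2^{O(d)}$ factor per step and a $\binom{d}{k}$-type factor for the dimension pattern; then simplify the binomial to $2^{O(d\log k)}$ using $k \le d$ (the case $k > d$ being trivial). The main obstacle is precisely this last arithmetic: making the dimension-pattern count collapse to $2^{O(d\log k)}$ rather than the naive $2^{O(k\log d)}$, which requires being careful that only $\min(k,d)$ dimensions are ever in play and that $\binom{d}{j} \le 2^{O(d\log k)}$ uniformly for $j \le k \le d$.
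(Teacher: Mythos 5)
Your reduction to a purely combinatorial count in the face poset of the $d$-cube is the right starting move and parallels the paper, which similarly localizes to a single $d$-cube $c\supseteq f$ and counts flags inside $c$. However, the quantitative step is wrong, not just unfinished. You assert that the number of chains of length $k+1$ in the face poset of the $d$-cube is at most $\binom{d+1}{k}(2^d)^{O(1)}$, i.e.\ $2^{O(d)}$, by ``choosing the $k$ dimension values and then one face per value with at most $2^d$ choices.'' This is false: the number of maximal-type chains $v=f_0\subsetneq\cdots\subsetneq f_k=c$ with $v$ a vertex is exactly $k!\,S(d,k)$, the number of ordered partitions of $\{1,\dots,d\}$ into $k$ nonempty blocks (each step $f_{i-1}\subsetneq f_i$ corresponds to freeing a nonempty set of coordinates), and for $k\approx d/2$ this is $2^{\Theta(d\log d)}$, which dwarfs $2^{O(d)}$. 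So a bound of the form $\binom{d+1}{k}\cdot 2^{O(d)}$ undercounts and cannot be the basis of a correct proof; the ``one face per dimension value'' step does not have only $O(1)$ factors of $2^d$ total, it has a product of binomials over $k$ steps whose value is precisely the Stirling quantity you need to estimate.

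The paper's proof makes exactly this bijection explicit and is what you were circling around: fix a vertex $v$ of a $d$-cube $c$ containing $f$; faces of $c$ incident to $v$ correspond to coordinate subsets, so a flag $v\subseteq\cdots\subseteq c$ of length $k+1$ is an ordered $k$-partition of $\{1,\dots,d\}$, counted by $k!\,S(d,k)\le 2^{O(d\log k)}$ (this Stirling bound is cited from \cite{ckr-polynomial}). Multiplying by $2^d$ vertices of $c$ and by $\le 3^d$ cubes $c$ containing $f$ costs only $2^{O(d)}$, and flags that do not start at a vertex or end at $c$ are handled by extending them to ones that do (increasing length by at most $2$). Two smaller inaccuracies in your sketch: the number of $b$-faces of the $d$-cube containing a fixed $a$-face is $\binom{d-a}{b-a}$, not $\binom{d-a}{b-a}2^{b-a}$; and your repeated worry about $2^{O(k\log d)}$ versus $2^{O(d\log k)}$ dissolves once the count is recognized as $k!\,S(d,k)$ rather than as a product of independent per-step choices. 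The missing idea is the ordered-partition bijection; without it, the exponent does not come out.
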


\begin{proof}
A $0$-simplex in $\ux\alpha$ corresponds to an active face $f$ 
in a cubical complex $\square$.
An incident simplex corresponds to an active flag of $\square$ involving $f$.
Let $c$ be a $d$-cube of $\square$ that contains $f$.
We simply count the number of flags of length $(k+1)$ contained in $c$ 
(regardless of whether they contain $f$ or not) and show that the number is 
$2^{O(d\log k)}$. 
Since $f$ is contained in at most $2^d$ $d$-cubes, the bound follows.

To count the number of flags containing $c$, we us similar ideas 
as in~\cite{brunker-dgds}: first fix a vertex
$v$ of $c$ and count the flags of the form $v\subseteq\ldots\subseteq c$. 
Every $\ell$-face in $c$ incident to $v$ corresponds to a subset of $\ell$ 
coordinate indices, in the sense that the coordinates not chosen are fixed 
to the coordinates of $v$ for the face. 
With this correspondence, it is not hard to see that a flag from $v$ to $c$ of 
length $k+1$ corresponds to an ordered $k$-partition of $\{1,\ldots,d\}$.
The number of such partitions is known as $k!$ times the quantity
$\left\{\begin{array}{c}d\\k\end{array}\right\}$, which is the Stirling number
of second kind, and is upper bounded by $2^{O(d\log k)}$~\cite{ckr-polynomial}. 
Since $c$ has $2^d$ vertices, the total number
of flags of the form $v\subseteq\ldots\subseteq c$ with any vertex $v$ is hence 
$2^d k! 2^{O(d\log k)}=2^{O(d\log k)}$.

For flags which do not start with a vertex and do not end with $c$, we can simply 
extend by adding a vertex and/or the $d$-cube and obtain flags 
of length $k+2$ or $k+3$. 
The same argument as above shows again that the number of such flags is
bounded by $2^{O(d\log k)}$ which proves the claim.
\end{proof}

\begin{theorem}
\label{theorem:towersize}
The $k$-skeleton of the tower has size at most $n2^{O(d\log k)}$.
\end{theorem}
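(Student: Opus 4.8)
The plan is to combine the two preceding lemmas with the scale-count bound from the opening of Section~\ref{section:computational}. Recall that Lemma~\ref{lemma:vtx_inclusion} bounds the number of $0$-simplices included over all scales by $n2^{O(d)}$, and Lemma~\ref{lemma:starsize} bounds the number of $k$-simplices incident to any fixed $0$-simplex by $2^{O(d\log k)}$. The key structural observation is that every $k$-simplex (active flag) $\sigma=(f_0\subseteq\ldots\subseteq f_k)$ has a distinguished $0$-simplex, namely its maximal face $f_k$, and $\sigma$ is incident to $f_k$. So if I can charge each included $k$-simplex to an included $0$-simplex, then multiplying the two bounds gives $n2^{O(d)}\cdot 2^{O(d\log k)}=n2^{O(d\log k)}$ for the $k$-skeleton, as desired.

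First I would make precise what ``included'' means for a $k$-simplex in the tower and connect it to inclusions of $0$-simplices. A $k$-simplex $\sigma=(f_0\subseteq\ldots\subseteq f_k)$ is included at scale $\alpha_{s+1}$ precisely when it has no preimage under $\tg:\ux_{\alpha_s}\to\ux_{\alpha_{s+1}}$. Since $\tg$ acts on a flag by applying $g_s$ to each face, $\sigma$ fails to have a preimage only if at least one of the faces $g_s^{-1}(f_i)$ is ``missing'' in the appropriate sense — concretely, if some $f_i$ is not in the image of $g_s$ restricted to active faces. The cleanest route is: an included $k$-simplex $\sigma$ at scale $\alpha_{s+1}$ must contain a $0$-simplex (one of its faces $f_i$) that is itself included at scale $\alpha_{s+1}$; otherwise every $f_i=g_s(f_i')$ for active faces $f_i'$ forming a flag $\sigma'$ in $\ux_{\alpha_s}$ with $\tg(\sigma')=\sigma$. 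So I charge $\sigma$ to such a face $f_i$ (say the one of lowest dimension that is included, breaking ties arbitrarily).

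Then the counting is immediate: for each included $0$-simplex $f$ at scale $\alpha_{s+1}$, the $k$-simplices charged to it are among the $k$-simplices of $\ux_{\alpha_{s+1}}$ incident to $f$, of which there are at most $2^{O(d\log k)}$ by Lemma~\ref{lemma:starsize}. Summing over the $n2^{O(d)}$ included $0$-simplices (Lemma~\ref{lemma:vtx_inclusion}) gives the total bound $n2^{O(d)}2^{O(d\log k)}=n2^{O(d\log k)}$, absorbing the $2^{O(d)}$ factor into the exponent since $\log k\ge 1$ for $k\ge 2$ (and the cases $k=0,1$ are handled directly by Lemmas~\ref{lemma:vtx_inclusion} and \ref{lemma:starsize}). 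I should also note that the finite range of scales $[\alpha_0,\alpha_m]$ does not add an extra factor because each inclusion is counted exactly once at the scale where it happens, regardless of how many scales there are.

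The main obstacle I anticipate is making the charging step airtight — specifically, verifying that every included $k$-simplex really does contain an included $0$-simplex. This requires checking that if all faces $f_0,\ldots,f_k$ of $\sigma$ lie in the image of the active-face map $g_s$, then one can choose preimages $f_0',\ldots,f_k'$ that are \emph{nested} (so they form a genuine flag $\sigma'$ with $\tg(\sigma')=\sigma$), not merely nested in pairs. This is where Lemma~\ref{lemma:gcell} (and its second claim about opposite facets mapping back to opposite facets) does the work, but the inductive argument building a nested chain of preimages from a nested chain of images needs to be spelled out; this is the one place where the proof is more than a one-line appeal to the earlier lemmas.
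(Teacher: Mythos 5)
Your approach is the same as the paper's: bound the number of included $0$-simplices by Lemma~\ref{lemma:vtx_inclusion}, bound the size of a star by Lemma~\ref{lemma:starsize}, and charge each included $k$-simplex to an included $0$-simplex in its own flag. You are also right to isolate the nesting of preimages as the delicate step; the paper disposes of it in a single sentence (``if each $f_i$ has a preimage $e_i$ on the previous scale, then $e_0\subseteq\ldots\subseteq e_k$ is a flag''), which is exactly the unproven assertion you flag, since the active preimages of distinct $f_i$ are non-unique and need not be compatible.

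However, I do not think Lemma~\ref{lemma:gcell} is enough to close this gap, and in fact the ``crucial insight'' can fail as stated. Take $G_s=\Z^2$ with active vertices $V_s=\{(1,0),(2,1),(1,2)\}$ and shift $G_{s+1}=2\Z^2+(\tfrac12,\tfrac12)$. The only positive-dimensional active faces at scale $s$ are the squares $[1,2]^2$ (spanned by $(2,1),(1,2)$) and $[1,2]\times[0,1]$ (spanned by $(1,0),(2,1)$); under $g_s$ they map to the square $[\tfrac12,\tfrac52]^2$ and to the edge with endpoints $(\tfrac12,\tfrac12),(\tfrac52,\tfrac12)$, while the three active vertices map to $(\tfrac12,\tfrac12),(\tfrac52,\tfrac12),(\tfrac12,\tfrac52)$. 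Now the flag $\sigma=\{(\tfrac12,\tfrac12)\subsetneq[\tfrac12,\tfrac52]^2\}$ is a $1$-simplex of $\ux_{\alpha_{s+1}}$, and both of its $0$-simplices have active preimages, namely $(1,0)$ and $[1,2]^2$, so neither is an included $0$-simplex. Yet $\sigma$ itself has no preimage under $\tg$: the unique active preimage $(1,0)$ of the vertex does not lie in $[1,2]^2$, and the vertex of $[1,2]^2$ that $g_s$ sends to $(\tfrac12,\tfrac12)$, namely $(1,1)$, is inactive. Hence $\sigma$ is included while none of its $0$-simplices is, and the charging rule does not apply to it. (The flag $\{[(\tfrac12,\tfrac12),(\tfrac52,\tfrac12)]\subsetneq[\tfrac12,\tfrac52]^2\}$ fails the same way.) The second claim of Lemma~\ref{lemma:gcell} only guarantees that a sub-face of $g_s(e)$ lifts to \emph{some} sub-face of $e$; it gives no control over whether the lift is active, which is precisely what the argument needs. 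Since the paper's proof makes the identical unproven assertion, a rigorous proof of the theorem would need either a separate bound on the number of such anomalous inclusions or a different charging scheme.
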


\begin{proof}  
Let $\sigma=f_0\subseteq\ldots \subseteq f_k$ be a flag included 
at some scale $\alpha$.
The crucial insight is that this can only happen if 
at least one face $f_i$ in the flag
is included in the tower at the same scale. 
Indeed, if each $f_i$ has a preimage $e_i$
on the previous scale, then $e_0\subseteq \ldots\subseteq e_k$ 
is a flag on the previous scale which maps to $\sigma$ under $\tg$. 

We charge the inclusion of the flag to the inclusion of $f_i$. 
By Lemma~\ref{lemma:starsize}, the $0$-simplex $f_i$ of $\ux$ is charged at most 
$\sum_{i=1}^{k}2^{O(d\log i)}=2^{O(d\log k)}$ times in this way, 
and by Lemma~\ref{lemma:vtx_inclusion}, there are at most $n2^{O(d)}$ $0$-simplices
that can be charged. 
\end{proof}

\subsection{Computing the tower}
\label{subsection:computing}

Recall from the construction of the grids that $G_{s+1}$ is built from $G_s$
using an arbitrary translation vector $(\pm 1,\ldots,\pm 1)\in\Z^d$.
In our algorithm, we pick the components of this translation vector 
uniformly at random, and independently for each scale.

Recall the cubical map $g_s:\square_s\rightarrow\square_{s+1}$ 
from Section~\ref{section:grids}.
For a fixed $s$, we denote by $g^{(j)}:\square_s\rightarrow\square_{s+j}$ the $j$-fold composition of $g$,
that is $g^{(j)}=g_{s+j-1}\circ g_{s+j-2}\circ\ldots\circ g_s$.

\begin{lemma}
\label{lemma:g_survival}
For a $k$-face $f$ of $\square_s$, let $Y$ be the minimal integer $j$ 
such that $g^{(j)}(f)$ is a vertex.
Then $E[Y]\leq 3\log k$.
\end{lemma}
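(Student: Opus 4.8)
The plan is to track what happens to a single $k$-face $f$ under the random maps $g_s$, reducing the question to a coupon-collector-type estimate. The key structural fact, extracted from the definition of $g_s$ and Lemma~\ref{lemma:gcell}, is that $g_s$ acts on each coordinate direction independently, and that a $k$-face is a product of $k$ "active" coordinate directions (the ones in which it has extent $\alpha_s$) and $d-k$ "flat" coordinate directions. A coordinate direction of $f$ collapses under one application of $g_s$ precisely when the two opposite facets of $f$ in that direction are identified, i.e. when their two vertex-coordinates (differing by $\alpha_s$) are sent to the same vertex of $G_{s+1}$. Since $g_s$ picks, for each coordinate, a random sign shift of $\pm\alpha_s/2$ before doubling, an elementary calculation on $\R$ shows that each of the $k$ active directions of $f$ collapses independently with probability exactly $1/2$ at each step, and once collapsed it stays collapsed (flat directions remain flat). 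Thus $g^{(j)}(f)$ is a vertex as soon as all $k$ initially-active directions have collapsed.

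First I would make this one-dimensional reduction precise: fix an active coordinate, normalize as in the proof of Lemma~\ref{lemma:vorcontain} so that the two facet-coordinates are $0$ and $\alpha_s=2$, and check that after the random shift and doubling they coincide in $G_{s+1}$ with probability $1/2$, and otherwise remain at $L_\infty$-distance $\alpha_{s+1}$ (so the direction stays active). The independence across coordinates is immediate because the sign choices are independent across coordinates, and independence across scales is given. So $Y$ is the number of rounds needed until $k$ independent fair coins have each come up heads at least once — exactly the coupon-collector variable with $k$ coupons where each round reveals coupon $i$ with probability $1/2$ independently.

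Next I would bound $E[Y]$. Let $Y_i$ be the first round at which direction $i$ collapses; each $Y_i$ is geometric with parameter $1/2$, so $E[Y_i]=2$, and $Y=\max_{1\le i\le k} Y_i$. The standard tail bound gives $\Pr[Y>t]=\Pr[\exists i:\ Y_i>t]\le k\,\Pr[Y_1>t]=k\,2^{-t}$, hence
\[
E[Y]=\sum_{t\ge 0}\Pr[Y>t]\le t_0+\sum_{t\ge t_0}k\,2^{-t}=t_0+k\,2^{-t_0+1}
\]
for any threshold $t_0$. Choosing $t_0=\lceil\log_2 k\rceil$ makes the tail sum at most $2$, giving $E[Y]\le \log_2 k+2\le 3\log k$ for $k\ge 2$ (and the cases $k=0,1$ are trivial since then $Y\le 1$). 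This yields the stated bound.

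The main obstacle is the first step: verifying carefully that the collapse probability of each active direction is \emph{exactly} $1/2$ and, crucially, that a direction which does not collapse remains genuinely active (extent exactly $\alpha_{s+1}$) rather than partially shrinking, so that the $Y_i$ really are independent geometric variables and $Y$ is exactly their maximum. This requires unwinding the recursive grid definition \eqref{equation:grids} coordinate by coordinate; the doubling $2(G_s-O)$ together with the half-integer shift $\frac{\alpha_s}{2}(\pm1,\dots,\pm1)$ is what produces the clean factor-of-two behavior, and one must confirm the two facet-vertices of $f$ fall into Voronoi regions of $G_{s+1}$ that either coincide or are adjacent depending solely on the sign chosen in that coordinate. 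Once this coordinate-wise picture is established, the probabilistic estimate is routine.
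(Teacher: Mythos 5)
Your proposal is correct and follows essentially the same route as the paper: reduce to independent coordinate-wise collapse events each occurring with probability $1/2$ per step, apply a union bound to get $\Pr[Y>j]\le k2^{-j}$, and sum the tail. Your threshold-based bound $E[Y]\le t_0 + k\,2^{-t_0+1}$ with $t_0=\lceil\log_2 k\rceil$ is a slightly cleaner way of organizing the final computation than the paper's nested-sum estimate, but the argument is otherwise the same.
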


\begin{proof}
Without loss of generality, assume that the grid under consideration is $\Z^d$
and $f$ is the $k$-face spanned by the vertices 
$\{\underbrace{\pm 1,\ldots,\pm 1}_{k},0,\ldots,0\}$.
The proof for the general case is analogous.

Let $y_1\in\{-1,1\}$ denote the randomly chosen shift of the first coordinate. 
If $y_1=1$, the grid $G'$ on the next scale has a grid point 
with $x_1$-coordinate $1/2$. 
Clearly, the closest grid point in $G'$ to the origin is of the form 
$(+1/2,\pm 1/2,\ldots,\pm 1/2)$, and thus, 
this point is also closest to $(1,0,0,\ldots,0)$. 
The same is true for any point $(0,\ast,\ldots,\ast)$ and its corresponding point 
$(1,\ast,\ldots,\ast)$ on the opposite facet.
Hence, for $y_1=1$, $g(f)$ is a face where all points 
have the same $x_1$-coordinate.

Contrarily, if $y_1=-1$, the origin is mapped to some point 
$(-1/2,\pm 1/2,\ldots,\pm 1/2)$ and $(1,0,\ldots,0)$ is mapped to 
$(3/2,\pm 1/2,\ldots,\pm 1/2)$, as one can directly verify.
Hence, in this case, in $g(f)$, points do not all have the same $x_1$ coordinate.

We say that the $x_1$-coordinate \emph{collapses} in the first case and 
does not collapse in the second.
Because the shift is chosen uniformly at random for each scale, 
the probability that $x_1$ did not collapse after $j$ iterations
is $1/2^{j}$.

$f$ spans $k$ coordinate directions, so it must
collapse along each such direction to contract to a vertex.
Once a coordinate collapses, it stays collapsed at all higher scales.
As the shift is independent for each coordinate
direction, the probability of a collapse is the same 
along all coordinate directions that $f$ spans.
Using union bound, the probability that $g^j(f)$ has not collapsed 
to a vertex is at most $k/2^j$. 
With $Y$ as in the statement of the lemma
$P(Y\ge j)\le k/2^j$. 
Hence, 

\begin{align*}
E[Y]=\sum_{j=1}^{\infty} j P(Y=j) &= \sum_{j=1}^{\infty} P(Y\ge j) 
\le  \log k + \sum_{c=1}^{\infty}\sum_{j=c\log k}^{(c+1)\log k} P(Y\ge j)\\
& \le \log k + \sum_{c=1}^{\infty}\sum_{j=c\log k}^{(c+1)\log k}
P(Y\ge c\log k)
\le \log k+ \sum_{c=1}^{\infty}\log k\frac{k}{2^{c \log k}} \\
& \le \log k+ \log k\sum_{c=0}^{\infty} \frac{1}{k^{c}}
\le \log k + 2\log k \le 3 \log k. 
\end{align*}

\end{proof}

As a consequence of the lemma, the expected ``lifetime'' of $k$-simplices 
in our tower with $k>0$ is rather short: Given a flag 
$e_0\subseteq\ldots\subseteq e_\ell$, the face $e_\ell$ will be mapped to a vertex
after $O(\log d)$ steps, and so will be all its sub-faces, 
turning the flag into a vertex.
It follows that summing up the total number of $k$-simplices with $k>0$ over all 
$\ux_\alpha$ yields an upper bound of $n2^{O(d\log k)}$ as well.

\subparagraph*{Algorithm description}
\label{para:algo1}
We first specify what it means to ``compute'' the tower. 
We make use of the fact that
a simplicial map between simplicial complexes can be written 
as a composition of simplex inclusions
and contractions of $0$-simplices~\cite{desu-gic,kstwr}. 
That is, when passing from a scale $\alpha_s$ to $\alpha_{s+1}$, 
it suffices to specify which pairs of $0$-simplices in $\ux_{\alpha_s}$ 
are mapped to the same image under $\tg$ and which 
simplices in $\ux_{\alpha_{s+1}}$ are included. 

The input is a set of $n$ points $P\subset \R^d$.
The output is a list of \emph{events}, where each event is of 
one of the three following types:
a \emph{scale event} defines a real value $\alpha$ and 
signals that all upcoming events 
happen at scale $\alpha$ (until the next scale event). 
An \emph{inclusion event} introduces a new simplex, 
specified by the list of $0$-simplices
on its boundary (we assume that every $0$-simplex is identified by an integer). 
A \emph{contraction event} is a pair of $0$-simplices $(i,j)$ and signifies that 
$i$ and $j$ are identified as the same from that scale.

In a first step, we calculate the range of scales that we are interested in.
We compute a $2$-approximation of $diam(P)$ 
by taking any point $p\in P$ and calculating $max_{q\in P}\|p-q\|$.
Then we compute $CP(P)$ using 
a randomized algorithm in $n2^{O(d)}$ expected time~\cite{cp_khas}.

Next, we proceed scale-by-scale and construct the list of events accordingly.
On the lowest scale, we simply compute the active vertices by point location for $P$
in a cubical grid, and enlist $n$ inclusion events
(this is the only step where the input points are considered in the algorithm).
We use an auxiliary container $S$ and maintain the invariant
that whenever a new scale is considered, 
$S$ consists of all simplices of the previous scale,
sorted by dimension. 
In $S$, for each $0$-simplex, we store an id and
a coordinate representation of the active face to which it corresponds.
Every $\ell$-simplex with $\ell>0$ is stored just as a list of integers, 
denoting its boundary $0$-simplices.
We initialize $S$ with the $n$ $0$-simplices at the lowest scale.

Let $\alpha<\alpha'$ be any two consecutive 
scales with $\square,\square'$ the respective cubical complexes
and $\ux,\ux'$ the approximation complexes, with $\tg:\ux\rightarrow \ux'$ 
being the simplicial map connecting them. 
Suppose we have already constructed all events at scale $\alpha$. 
We enlist the scale event for $\alpha'$.
Then, we enlist the contraction events. 
For that, we iterate through the $0$-simplices
of $\ux$ and compute their value under $g$, using point location in a cubical grid.
We store the results in a list $S'$ (which contains the simplices of $\ux'$).
If for a $0$-simplex $j$, $g(j)$ is found to be equal to $g(i)$ for a 
previously considered $0$-simplex, we choose the minimal such $i$ and 
enlist a contraction event for $i$ and $j$.

We turn to the inclusion events and start with the case of $0$-simplices.
Every $0$-simplex is an active face at scale $\alpha'$ and must contain an 
active vertex, which is also a $0$-simplex of $\ux'$. 
We iterate through the elements in $S'$. 
For each active vertex $v$ encountered, we go over all faces of the cubical complex 
$\square'$ that contain $v$ as vertex and check whether they are active. 
For every active face encountered that is not in $S'$ yet, we add it to $S'$
and enlist an inclusion event of a new $0$-simplex. 
At termination, all $0$-simplices of $\ux'$ have been detected.

Next, we iterate over the simplices of $S$ of dimension $\geq 1$ and 
compute their image under $\tg$, and store the result in $S'$. 
To find the simplices of dimension $\geq 1$ included at $\ux'$, we exploit our 
previous insight that they contain at least one $0$-simplex that is included 
at the same scale (see the proof of Theorem~\ref{theorem:towersize}).
Hence, we iterate over the $0$-simplices included in $\ux'$ and 
proceed inductively in dimension.
Let $v$ be the current $0$-simplex under consideration;
assume that we have found all $(p-1)$-simplices in $\ux'$ that contain $v$. 
Each such $(p-1)$-simplex $\sigma$ is a flag in $\square'$. 
We iterate over all faces $e$ that extend $\sigma$ to a flag of length $p+1$.
If $e$ is active, we found a $p$-simplex in $\ux'$. 
If this simplex is not in $S'$ yet, we add it and enlist an inclusion event for it. 
We also enqueue the simplex in our inductive procedure, to look for
$(p+1)$-simplices in the next iteration. 
At the end of the procedure, we have detected all simplices in $\ux'$ 
without preimage, and $S'$ contains all simplices of $\ux'$. 
We set $S\gets S'$ and proceed to the next scale. 
This ends the description of the algorithm.

\begin{theorem}
\label{theorem:algo1_bary}
To compute the $k$-skeleton, 
the algorithm takes time $\big(n2^{O(d)}\log\Delta + 2^{O(d)}M\big)$ time 
in expectation and $M$ space, where $M$ is the size of the tower. 
In particular, the expected time is bounded by 
$\big(n2^{O(d)}\log\Delta + n2^{O(d\log k)}\big)$
and the space is bounded by $n2^{O(d\log k)}$.
\end{theorem}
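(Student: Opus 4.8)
The plan is to charge the running time phase by phase to the combinatorial objects the algorithm manipulates — cubical faces, flags, and grid points — and then aggregate these counts over all scales, using the deterministic size bounds of Lemmas~\ref{lemma:vtx_inclusion} and~\ref{lemma:starsize} together with the expected ``short lifetime'' guarantee of Lemma~\ref{lemma:g_survival}. First I would dispatch the preprocessing: the $2$-approximation of $\diam(P)$ costs $O(nd)$, computing $CP(P)$ costs $n2^{O(d)}$ in expectation, and these determine the scale range, whose cardinality is $m=O(\log\Delta+\log d)$. The second preliminary observation is that \emph{every} elementary step the algorithm performs on a single face, a single flag, or a single grid point costs $2^{O(d)}$: point location for $g$, evaluating $g$ on a $k$-face (it acts coordinatewise), hashing a coordinate representation, testing whether a face is active by inspecting its $\le 2^d$ vertices, enumerating the $\le 3^d$ faces incident to a vertex, or enumerating the $\le (d+1)3^d$ one-step extensions of a flag — all are $2^{O(d)}$ because a $d$-cube has $2^d$ vertices and $3^d$ faces and a flag has length $\le d+1$. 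Hence the work at a given scale is $2^{O(d)}$ times the number of faces, flags, and extension candidates touched there, and it remains to bound these counts summed over all scales.

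Next I would split the $0$-simplices of $\ux_\alpha$ — which are the active faces of $\square_\alpha$ — by their cubical dimension. The active vertices number at most $n$ at every scale, since $g$ is surjective on them (Lemma~\ref{lemma:vorcontain}) so their count is non-increasing; across all $m$ scales this is $\le nm$, contributing a term $n2^{O(d)}\log\Delta$ after multiplying by the per-element cost (the stray $\log d$ in $m$ and the one-off $CP(P)$ term are absorbed into $2^{O(d)}M$, as $M\ge n$). The active faces of dimension $\ge 1$ are included at most $n2^{O(d)}$ times in total (Lemma~\ref{lemma:vtx_inclusion}); by Lemma~\ref{lemma:g_survival} each such face, once included, is mapped to a vertex — and thereby collapses onto the active vertex it contains — within $O(\log d)$ scales in expectation, so the total number of times these faces are touched over all scales is $n2^{O(d)}\cdot O(\log d)=n2^{O(d)}$ in expectation. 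The same lifetime argument applies to the simplices of $\ux_\alpha$ of dimension $\ge 1$, i.e.\ flags of length $\ge 2$: the top face of such a flag becomes a vertex within $O(\log d)$ scales in expectation and then the whole flag has collapsed, so each of the $M$ included simplices is touched $O(\log d)$ times in expectation, for a total of $2^{O(d)}M$ including the per-element cost.

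The remaining cost is the inductive breadth-first search that detects newly included simplices of dimension $\ge 1$, run from every newly included $0$-simplex. Here I would use the converse of the key observation in the proof of Theorem~\ref{theorem:towersize}: if a flag $f_0\subseteq\ldots\subseteq f_k$ in $\ux_{\alpha'}$ contains a face $f_i$ with no preimage under $g$, then the flag itself has no preimage, since a preimage of the flag would yield one of $f_i$. Consequently the searches started at the new $0$-simplices visit exactly the simplices that are themselves new at that scale, and a new $k$-simplex is visited in at most $k+1\le d+1$ of these searches (once per new $0$-simplex among its $k+1$ faces). Summed over all scales the number of such visits is at most $(d+1)M$, so at $2^{O(d)}$ cost per visit this phase costs $2^{O(d)}M$. (Alternatively, Lemma~\ref{lemma:starsize} bounds each search by the $2^{O(d\log k)}$ simplices in the star of its $0$-simplex and Lemma~\ref{lemma:vtx_inclusion} bounds the number of new $0$-simplices by $n2^{O(d)}$, giving the same aggregate $n2^{O(d\log k)}$.) Combining all phases gives the expected time $n2^{O(d)}\log\Delta+2^{O(d)}M$. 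For the space bound, I would note that $|\ux_\alpha|\le M$ for every $\alpha$, since every simplex present at scale $\alpha$ was included at some scale $\le\alpha$, and the algorithm only ever keeps the containers $S,S'$ for two consecutive complexes, so $O(M)$ working memory suffices. Finally, the ``in particular'' statement is obtained by substituting the deterministic bound $M\le n2^{O(d\log k)}$ from Theorem~\ref{theorem:towersize}.

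The step I expect to be the main obstacle is the aggregation in the second paragraph. Since the set of included simplices and their scales of inclusion are themselves random, the claim ``$\sum_\alpha|\ux_\alpha|=2^{O(d)}M$ in expectation'' cannot be read off by naive linearity; I would derive it by writing the sum as a sum over included simplices of their respective lifetimes, conditioning each lifetime on the random shifts chosen up to that simplex's scale of inclusion — where the bound of Lemma~\ref{lemma:g_survival} applies verbatim with the grid at that scale as the base grid — and only then taking the outer expectation via the tower property; together with the deterministic $M\le n2^{O(d\log k)}$, this also fixes the precise meaning of the ``$2^{O(d)}M$'' term. A secondary point that needs care is verifying that the inductive search started from a new $0$-simplex actually reaches every simplex in its star (so that all newly included simplices are found), which follows because removing any non-$v$ face from a flag through $v$ leaves a lower-dimensional flag still through $v$.
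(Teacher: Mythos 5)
Your proposal is correct and follows essentially the same route as the paper's proof: charge each phase's work to combinatorial objects at $2^{O(d)}$ per touch, bound the per-scale contraction and new-$0$-simplex phases by $n2^{O(d)}$ and multiply by the $O(\log\Delta+\log d)$ scales, and use Lemma~\ref{lemma:g_survival} to bound $\sum_\alpha|\ux_\alpha|$ by $O(\log d)\cdot M$ in expectation so the flag-processing phases cost $2^{O(d)}M$. Two small points where you are more explicit than the paper are welcome: (i) you flag that $\sum_\alpha|\ux_\alpha|=O(\log d)\cdot M$ in expectation requires conditioning each simplex's lifetime on the shifts up to its inclusion scale before invoking Lemma~\ref{lemma:g_survival} and then applying the tower property — the paper states the bound without spelling this out; (ii) your $(d+1)M$-visit charging for the inductive search (each new $k$-simplex is reached from at most $k+1$ of its new $0$-simplices) is a slightly sharper way to get $2^{O(d)}M$ than the paper's per-scale $O(2^{O(d)}|\ux'|)$ bound. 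Neither changes the asymptotics, and the rest (preprocessing, the $|\ux_\alpha|\le M$ space bound, and substituting $M\le n2^{O(d\log k)}$) matches the paper's argument.
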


\begin{proof}
In the analysis, we ignore the costs of point locations in grids,
checking whether a face for being active, and searches in data structures $S$,
since all these steps have negligible costs when
appropriate data structures are chosen.

Computing the image of a $0$-simplex of $\ux$ costs $O(2^d)$ time.
Moreover, there are at most $n2^{O(d)}$ vertices altogether in the tower,
so this bound in particular holds on each scale.
Hence, the contraction events on a fixed scales can be computed in $n2^{O(d)}$.
Finding new $0$-simplices requires iterating over the cofaces of a vertex 
in a cubical complex. 
There are $3^d$ such cofaces. 
This has to be done for a subset of the $0$-simplices
in $\ux'$, so the running time is also $n2^{O(d)}$. 
Since there are $O(\log\Delta+\log d)$ scales considered, 
these steps require $n2^{O(d)}\log\Delta$ over all scales.

Computing the image of $\tg$ for a fixed scale costs at most $O(2^d|\ux|)$.
$M$ is the size of the tower, that is, the simplices without preimage, 
and $I$ the set of scales considered, so
the expected bound for $\sum_{\alpha\in I} |\ux_\alpha|=O(\log d M)$, 
because every simplex
has an expected lifetime of at most $3\log d$ by Lemma~\ref{lemma:g_survival}. 
Hence, the cost of these steps is bounded by $2^{O(d)}M$.

In the last step of the algorithm, we consider a subset of simplices of $\ux'$. 
For each one, we iterate over
a collection of faces in the cubical complex of size at most $2^O(d)$. 
Hence, this step is also bounded by $O(2^{O(d)}|\ux|)$ per scale, 
and hence bounded $2^{O(d)}M$ as well.

For the space complexity, the auxiliary data structure $S$ gets as large as $\ux$, 
which is clearly bounded by $M$.
For the output complexity, the number of contraction events is smaller than the 
number of inclusion events, because every contraction removes a $0$-simplex that 
has been included before.
The number of inclusion events is the size of the tower. 
The number of scale events as described is $O(\log\Delta+\log d)$. 
However, it is simple to get rid of this factor by only including scale events 
in the case that at least one inclusion/contraction
takes place at this scale. 
The space complexity bound follows.
\end{proof}

\section{Conclusion}
\label{section:conclusion}
We gave an approximation scheme for the Rips filtration, with improved
approximation ratio, size and computational complexity than previous approaches
for the case of high-dimensional point clouds.
Moreover, we introduced the technique of using acyclic carriers to prove
interleaving results. 
We point out that, while the proof of the interleaving in
Section~\ref{subsection:intlv} is still technically challenging,
it greatly simplifies by the usage of acyclic carriers; 
defining the interleaving chain maps
explicitly significantly blows up the analysis. 
There is also no benefit in knowing the interleaving maps because they are only 
required for the analysis, not for the computation.

Our tower is connected by simplicial maps; there are (implemented) algorithms
to compute the barcode of such towers~\cite{desu-gic,kstwr}.
It is also quite easy to adapt our tower construction to a streaming 
setting~\cite{kstwr}, where the output list of events is passed to an 
output stream instead of being stored in memory.

An interesting question is whether persistence can be computed efficiently 
for more general chain maps, which would allow more freedom in building 
approximation schemes.

\subparagraph*{Acknowledgments} Michael Kerber is supported by the 
Austrian Science Fund (FWF) grant number P 29984-N35. 
Sharath Raghvendra acknowledges support of NSF CRII grant CCF-1464276.

\bibliographystyle{plain}

\begin{thebibliography}{10}
	
	\bibitem{bs-approximating}
	M.~Botnan and G.~Spreemann.
	\newblock Approximating {P}ersistent {H}omology in {E}uclidean space through
	collapses.
	\newblock {\em Applied Algebra in Engineering, Communication and Computing},
	26(1-2):73--101, 2015.
	
	\bibitem{bss-metrics}
	P.~Bubenik, V.~de~Silva, and J.~Scott.
	\newblock Metrics for {G}eneralized {P}ersistence {M}odules.
	\newblock {\em Foundations of Computational Mathematics}, 15(6):1501--1531,
	2015.
	
	\bibitem{bs-categorization}
	P.~Bubenik and J.A. Scott.
	\newblock Categorification of {P}ersistent {H}omology.
	\newblock {\em Discrete \& Computational Geometry}, 51(3):600--627, 2014.
	
	\bibitem{carlsson-survey}
	G.~Carlsson.
	\newblock Topology and {D}ata.
	\newblock {\em Bulletin of the American Mathematical Society}, 46:255--308,
	2009.
	
	\bibitem{sheehy-cech}
	N.J. Cavanna, M.~Jahanseir, and D.~Sheehy.
	\newblock A {G}eometric {P}erspective on {S}parse {F}iltrations.
	\newblock In {\em Proceedings of the 27th Canadian Conference on Computational
		Geometry ({CCCG} 2015).}
	
	\bibitem{ch-proximity}
	F.~Chazal, D.~Cohen-Steiner, M.~Glisse, L.J. Guibas, and S.Y. Oudot.
	\newblock Proximity of {P}ersistence {M}odules and their {D}iagrams.
	\newblock In {\em Proceedings of the ACM Symposium on Computational Geometry
		(SoCG 2009)}, pages 237--246, 2009.
	
	\bibitem{ckr-polynomial}
	A.~Choudhary, M.~Kerber, and S.~Raghvendra.
	\newblock Polynomial-sized topological approximations using the
	{P}ermutahedron.
	\newblock In {\em Proceedings of the 32nd International Symposium on
		Computational Geometry (SoCG 2016)}, pages 31:1--31:16, 2016.
	
	\bibitem{desu-gic}
	T.K. Dey, F.~Fan, and Y.~Wang.
	\newblock Computing {T}opological {P}ersistence for {S}implicial maps.
	\newblock In {\em Proceedings of the ACM Symposium on Computational Geometry
		(SoCG 2014)}, pages 345--354, 2014.
	
	\bibitem{brunrer-book}
	H.~Edelsbrunner and J.~Harer.
	\newblock {\em Computational {T}opology - {A}n {I}ntroduction}.
	\newblock American Mathematical Society, 2010.
	
	\bibitem{brunker-dgds}
	H.~Edelsbrunner and M.~Kerber.
	\newblock Dual {C}omplexes of {C}ubical {S}ubdivisions of $\mathbb{R}^n$.
	\newblock {\em Discrete {\&} Computational Geometry}, 47(2):393--414, 2012.
	
	\bibitem{elz-topological}
	H.~Edelsbrunner, D.~Letscher, and A.~Zomorodian.
	\newblock Topological {P}ersistence and {S}implification.
	\newblock {\em Discrete \& Computational Geometry}, 28(4):511--533, 2002.
	
	\bibitem{kstwr}
	M.~Kerber and H.~Schreiber.
	\newblock Barcodes of {T}owers and a {S}treaming {A}lgorithm for {P}ersistent
	{H}omology.
	\newblock In {\em Proceedings of the 33rd International Symposium on
		Computational Geometry (SoCG)}, pages 57:1--57:15, 2017.
	
	\bibitem{kbs-cech}
	M.~Kerber and R.~Sharathkumar.
	\newblock Approximate {{\v{C}}}ech {C}omplex in {L}ow and {H}igh {D}imensions.
	\newblock In {\em International Symposium on Algorithms and Computation
		(ISAAC)}, pages 666--676, 2013.
	
	\bibitem{cp_khas}
	S.~Khuller and Y.~Matias.
	\newblock A {S}imple {R}andomized {S}ieve {A}lgorithm for the {C}losest-{P}air
	{P}roblem.
	\newblock {\em Information and Computation}, 118(1):34 -- 37, 1995.
	
	\bibitem{munkres}
	J.R. Munkres.
	\newblock {\em Elements of algebraic topology}.
	\newblock Westview Press, 1984.
	
	\bibitem{sheehy-rips}
	D.~Sheehy.
	\newblock Linear-size {A}pproximations to the {V}ietoris-{R}ips {F}iltration.
	\newblock {\em Discrete {\&} Computational Geometry}, 49(4):778--796, 2013.
	
\end{thebibliography}

\end{document}